\newcommand{\remove}[1]{}
\renewcommand{\Re}{\mathbb{R}}%
\definecolor{blue25}{rgb}{0, 0, 11}
\newcommand{\emphic}[2]{%
   \textcolor{blue25}{%
      \textbf{\emph{#1}}}%
   \index{#2}}
\newcommand{\emphi}[1]{\emphic{#1}{#1}}
   \newtheorem{theorem}{Theorem}[section] 
   \newtheorem{lemma}[theorem]{Lemma}
   \newcommand{\myqedsymbol}{\rule{2mm}{2mm}}
   \newcommand{\pbrcx}[1]{\mleft[ {#1} \mright]}%
   \newcommand{\Ex}[1]{\mathop{\mathbf{E}}\pbrcx{#1}}%
   \newcommand{\Prob}[1]{\mathop{\mathbf{Pr}}\pbrcx{#1}}
   \newcommand{\cardin}[1]{\left\lvert {#1} \right\rvert}
   \newcommand{\CH}{{\mathcal{CH}}}
\newcommand{\dist}[1]{\left\| {#1}  \right\|}
\newcommand{\remlab}[1]{\label{rem:#1}}
\newcommand{\remref}[1]{Remark~\ref{rem:#1}}
\newcommand{\thmlab}[1]{{\label{theo:#1}}}
\newcommand{\thmref}[1]{Theorem~\ref{theo:#1}}
\newcommand{\lemlab}[1]{\label{lemma:#1}}
\newcommand{\lemref}[1]{Lemma~\ref{lemma:#1}}
\newcommand{\figlab}[1]{\label{fig:#1}}
\newcommand{\figref}[1]{Figure~\ref{fig:#1}}
\newcommand{\seclab}[1]{\label{sec:#1}}
\newcommand{\etal}{\textit{et~al.}\xspace}
\newcommand{\pth}[1]{\mleft(#1\mright)}%
\newcommand{\eps}{\varepsilon}%
\newcommand{\distNS}[1]{\left\| {#1} \right\|}
\newcommand{\FlatOpt}{\EuScript{F}_{\mathrm{opt}}}
\newcommand{\FlatA}{\EuScript{F}}
\newcommand{\FlatB}{\EuScript{G}}
\newcommand{\FlatC}{\EuScript{H}}
\newcommand{\FlatD}{\EuScript{I}}
\newcommand{\Ell}{\EuScript{E}}
\newcommand{\Span}{\mathrm{span}}
\newcommand{\RSample}{\EuScript{R}}
\providecommand{\Badoiu}{B\u{a}doiu\xspace}
\newcommand{\Price}[2]{\mu_{#2}\pth{#1}}
\newcommand{\PrcOpt}[2]{\mu_{\mathrm{opt}}\pth{#1, #2}}
\newcommand{\constA}{\mathsf{c}}
\newcommand{\atgen}{\symbol{'100}}
\newcommand{\SarielThanks}[1]{\thanks{Department of Computer
      Science; 
      University of Illinois; 
      201 N. Goodwin Avenue;
      Urbana, IL, 61801, USA;
      {\tt sariel\atgen{}illinois.edu}; {\tt
         \url{http://sarielhp.org/}.} #1}}
\newcommand{\DistP}[2]{\mathbf{d}_{#2}\pth{#1}}
\newcommand{\DistPSQ}[2]{\mathbf{d}_{#2}^2\pth{#1}}
\newcommand{\ApproxFlat}{\textsf{ApproxFlat}}
\newcommand{\procSVD}{\texttt{SVD}\xspace}
\renewcommand{\th}{th\xspace}
\newcommand{\NSize}{\mathsf{N}{}}
\newcommand{\MatA}{\EuScript{M}}
\newcommand{\MatB}{\EuScript{B}}
\newcommand{\MatC}{\EuScript{C}}
\newcommand{\Fnorm}[1]{\left\| {#1} \right\|_F}
\newcommand{\ConstSL}{96}
\newcommand{\ConstSAlg}{500}
\newcommand{\ceil}[1]{\left\lceil {#1} \right\rceil}
\newcommand{\Term}[1]{\textsf{#1}}
\newcommand{\SVD}{\Term{SVD}\xspace}%
\theoremstyle{remark}%
   \newtheorem*{remark:unnumbered}[theorem]{Remark}%
   \newtheorem{remark}[theorem]{Remark}%
   \theoremstyle{nonumberplain}
   \newtheorem{proof}{Proof:}
\begin{document}

\title{Low Rank Matrix Approximation in Linear Time}

\author{Sariel Har-Peled\SarielThanks{Work on this paper
      was partially supported by a NSF CAREER award
      CCR-0132901.}}

\date{January 24, 2006}

\maketitle

\begin{abstract}
    Given a matrix $\MatA$ with $n$ rows and $d$ columns, and fixed
    $k$ and $\eps$, we present an algorithm that in linear time (i.e.,
    $O(\NSize )$) computes a $k$-rank matrix $\MatB$ with
    approximation error $\Fnorm{\MatA - \MatB}^2 \leq (1+\eps)
    \PrcOpt{\MatA}{k}$, where $\NSize = n d$ is the input size, and
    $\PrcOpt{\MatA}{k}$ is the minimum error of a $k$-rank
    approximation to $\MatA$.

    This algorithm succeeds with constant probability, and to our
    knowledge it is the first linear-time algorithm to achieve
    multiplicative approximation.
\end{abstract}


\section{Introduction}

In this paper, we study the problem of computing a low rank
approximation to a given matrix $\MatA$ with $n$ rows and $d$ columns.
A $k$-rank approximation matrix is a matrix of rank $k$ (i.e., the
space spanned by the rows of the matrix is of dimension $k$). The
standard measure of the quality of approximation of a matrix $\MatA$
by a matrix $\MatB$ is the squared Frobenius norm $\Fnorm{\MatA -
   \MatB}^2$, which is the sum of the squared entries of the matrix
$\MatA - \MatB$.

The optimal $k$-rank approximation can be computed by \SVD in time $O(
\min( n d^2, d n^2 ) )$. However, if $d$ and $n$ are large
(conceptually, consider the case $d=n$), then this is unacceptably
slow. Since one has to read the input at least once, a running time of
$\Omega(nd)$ is required for any matrix approximation algorithm. Since
$\NSize = n d$ is the size of the input, we will refer to running time
of $O(\NSize)$ as being linear.

\paragraph{Previous work.}
Frieze \etal{} \cite{fkv-fmcaf-04} showed how to sample the rows of
the matrix, so that the resulting sample span a matrix which is a good
low rank approximation. The error of the resulting approximation has
an additive term that depends on the norm of the input matrix.
Achlioptas and McSherry \cite{am-fclra-01} came up with an alternative
sampling scheme that yields a somewhat similar result.
 
Note, that a multiplicative approximation error which is proportional
to the error of the optimal approximation is more desirable as it is
potentially significantly smaller than the additive error. However,
computing quickly a small multiplicative low-rank approximation
remained elusive.  Recently, Deshpande \etal{} \cite{drvw-mapcv-06},
building on the work of Frieze \etal{} \cite{fkv-fmcaf-04}, made a
step towards this goal. They showed a multipass algorithm with an
additive error that decreases exponentially with the number of passes.
They also introduced the intriguing concept of volume sampling.

\Badoiu{} and Indyk \cite{bi-faahf-06} had recently presented a linear
time algorithm for this problem (i.e., $O( \NSize + \pth{d + \log
   n}^{O(1)} )$), for the special case $k=1$, where $\NSize = n d$.
They also mention that, for a fixed $k$, a running time of $O( \NSize
k \log(n/\eps) \log(1/\eps) )$ is doable.  In fact, using the Lanczos
method or Power method yields a $(1+\eps)$-approximation in time
roughly $O(\NSize (k/\eps)^{O(1)} \log n )$.  For $k=1$, this was
proved in the work of Kuczynski and Wozniakowski \cite{kw-elepl-92},
and the proof holds in fact for any fixed $k$.  We also point out
another algorithm, with similar performance, using known results, see
\remref{slow:but:easy}. Thus, the challenge in getting a linear
running time is getting rid of the $O(\log n)$ factor in the running
time.

In practice, the Power method requires $O(\NSize k I)$ time, where $I$
is the number of iterations performed, which depends on the
distribution of the eigenvalues, and how close they are to each other.
In fact, the $\NSize$ term can be replaced by the number of non-zero
entries in the matrix. In particular, if the matrix is sparse (and $I$
is sufficiently small) the running time to compute a low-rank
approximation to the matrix is sublinear.

One can interpret each row of the matrix $\MatA$ as a point in
$\Re^d$. This results in a point-set $P$ of $n$ points in $\Re^d$. The
problem now is to compute a $k$-flat (i.e., a $k$-dimensional linear
subspace) such that the sum of the squared distances of the points of
$P$ to the $k$-flat is minimized. This problem is known as the
$L_2$-fitting problem. The related $L_1$-fitting problem was recently
studied by Clarkson \cite{c-ssalo-05} and the $L_\infty$-fitting
problem was recently studied by Har-Peled and Varadarajan
\cite{hv-hdsfl-04} (see references therein for the history of these
problems).

\paragraph{Our Results.}
The extensive work on this problem mentioned above still leaves open
the nagging question of whether one can get a multiplicative low-rank
approximation to a matrix in linear time (which is optimal in this
case).  In this paper, we answer this question in the positive and
show a linear-time algorithm with small multiplicative error, using a
more geometric interpretation of this problem. In particular, we
present an algorithm that with constant probability computes a matrix
$\MatB$ such that $\Fnorm{\MatA- \MatB}^2 \leq (1+\eps)
\PrcOpt{\MatA}{k}$, where $\PrcOpt{P}{k}$ is the minimum error of a
$k$-rank approximation to $\MatA$, and $\eps > 0$ is a prespecified
error parameter. For input of size $\NSize = n d$ the running time of
the new algorithm is (roughly) $O(\NSize k^2 \log k)$; see
\thmref{main} for details.

Our algorithm relies on the observation that a small random sample
spans a flat which is good for most of the points. This somewhat
intuitive claim is proved by using $\eps$-nets and VC-dimension
arguments performed (conceptually) on the optimal (low-dimensional)
$k$-flat (see \lemref{r:sample} below). Next, we can filter the
points, and discover the outliers for this random flat (i.e., those
are the points which are ``far'' from the random flat).  Since the
number of outliers is relatively small, we can approximate them
quickly using recursion.  Merging the random flat together with the
flat returned from the recursive call results in a flat that
approximates well all the points, but its rank is too high. This can
be resolved by extracting the optimal solution on the merged flat, and
improving it into a good approximation using the techniques of Frieze
\etal{} \cite{fkv-fmcaf-04} and Deshpande \etal{}
\cite{drvw-mapcv-06}.

\section{Preliminaries}

\subsection{Notations}

A $k$-\emphi{flat} $\FlatA$ is a linear subspace of $\Re^d$ of
dimension $k$ (in particular, the origin is included in $\FlatA$).

For a point $q$ and a flat $\FlatA$, let $\DistP{q}{\FlatA}$ denote
the \emphi{distance} of $q$ to $\FlatA$. Formally, $\DistP{q}{\FlatA} =
\min_{x \in \FlatA} \dist{ q -x}$. We will denote the squared distance
by $\DistPSQ{q}{\FlatA} = \pth{\DistP{q}{\FlatA}}^2$.
    
The \emphi{price} of fitting a point set $P$ of $n$ points in $\Re^d$
to a flat $\FlatA$ is 
\begin{align*}
    \Price{P}{\FlatA} = \sum_{p \in P} \pth{\DistP{p}{\FlatA}}^2.
\end{align*}
The $k$-flat realizing $\displaystyle \min_{\dim(\FlatA) = k}
\Price{P}{\FlatA}$ is denoted by $\FlatOpt$, and its price is
$\PrcOpt{P}{k}$.

The \emphi{span} of a set $S \subseteq \Re^d$, denoted by $\Span(S)$,
is the smallest linear subspace that contains $S$.

The input is a set $P$ of $n$ points in $\Re^d$ and its size is
$\NSize = n  d$.


\subsection{Computing a $(s,\FlatA)$-sample.}

Let $\FlatA$ be a flat.  For a parameter $s$, a \emphi{$(s,
   \FlatA)$-sample} from $P$ is generated by picking point $p\in P$ to
the random sample with probability
\begin{align*}
\geq \constA \frac{\pth{\DistPSQ{p}{\FlatA}} }
{\sum_{q \in P} \pth{\DistPSQ{q}{\FlatA}} },
\end{align*}
where $\constA$ is an appropriate constant, and the size of the sample
is at least $s$.

It is not immediately clear how to compute a $(s,\FlatA)$-sample
efficiently since we have to find the right value of $\constA$ such
that the sample generated is of the required size (and not much bigger
than this size). However, as described by Frieze \etal{}
\cite{fkv-fmcaf-04}, this can be achieved by bucketing the basic
probabilities (i.e., ${{\DistPSQ{p}{\FlatA}} } /\sum_{q \in P}
{\DistPSQ{q}{\FlatA}}$ for a point $p \in P$), such that each bucket
corresponds to points with probabilities in the range
$[2^{-i},2^{-i+1}]$. It is now easy to verify that given a target size
$s$, one can compute the corresponding $\constA$ in linear time, such
that the expected size of the generated sample is, say, $4s$.  We also
note, that by the Chernoff inequality, with probability $\geq 1/2$,
the sample is of size in the range $[s,8s]$.

\begin{lemma}
    \lemlab{weighted:sample}%
    Given a flat $\FlatA$ and a parameter $s$, one can compute a
    $(s,\FlatA)$-sample from a set $P$ of $n$ points in $\Re^d$ in $O(
    \NSize \dim(\FlatA) )$ time. The sample is of size in the range
    $[s,8s]$ with probability $\geq 1/2$.

    Furthermore, if we need to generate $t$ independent
    $(s,\FlatA)$-samples, all of size in the range $[s,8s]$, then this
    can be done in $O( \NSize \dim(\FlatA) + n \cdot t )$ time, and
    this bound on the running time holds with probability $\geq 1-
    \exp(-t)$.
\end{lemma}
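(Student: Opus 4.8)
The plan is to prove the two parts of \lemref{weighted:sample} separately, since the first is essentially a restatement of the bucketing construction of Frieze \etal{} \cite{fkv-fmcaf-04} and the second is a standard boosting argument on top of it.

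\textbf{Part 1: a single sample.} First I would make precise the bucketing scheme sketched in the preceding paragraph. Computing each basic probability $\rho_p = \DistPSQ{p}{\FlatA}/\sum_{q\in P}\DistPSQ{q}{\FlatA}$ requires computing $\DistP{p}{\FlatA}$ for every $p\in P$; projecting a point onto a $\dim(\FlatA)$-dimensional subspace (given an orthonormal basis) takes $O(d\dim(\FlatA))$ time per point, for a total of $O(nd\dim(\FlatA)) = O(\NSize\dim(\FlatA))$. With the $\rho_p$'s in hand, sorting them into buckets $[2^{-i},2^{-i+1}]$ takes $O(n)$ additional time (the number of distinct relevant buckets is $O(n)$, and we can radix-bucket on $\floor{\log_2(1/\rho_p)}$). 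Then I would argue that because within each bucket all probabilities agree up to a factor of $2$, scaling by a uniform constant $\constA$ overshoots the true probabilities by at most a factor $2\constA$ and undershoots by at most $\constA$; hence the expected sample size is $\Theta(\constA)$, and one binary-searches (or directly solves) for the $\constA$ making the expected size exactly $4s$ in $O(n)$ time. Finally, the actual sample size is a sum of independent $\{0,1\}$ indicators with mean $4s$, so a two-sided Chernoff bound gives $\Prob{\text{size}\notin[s,8s]}\le 1/2$ (in fact much smaller for $s$ large, but $1/2$ suffices). Collecting the bounds yields the claimed $O(\NSize\dim(\FlatA))$ running time.

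\textbf{Part 2: $t$ independent samples.} The point is that the expensive preprocessing — computing all the $\rho_p$'s and the bucket structure and $\constA$ — is done once, in $O(\NSize\dim(\FlatA))$ time, and is shared across all $t$ samples. After that, drawing one sample costs $O(n)$ time (one pass flipping the biased coin for each point), so $t$ samples cost $O(nt)$ more, for a total of $O(\NSize\dim(\FlatA)+nt)$. For the probability bound: each of the $t$ samples independently has size in $[s,8s]$ with probability $\ge 1/2$, so the number of ``good'' samples among the $t$ stochastically dominates a $\mathrm{Binomial}(t,1/2)$ variable. The probability that \emph{all} $t$ are good is at least... well, that would only be $2^{-t}$, which is the wrong direction, so instead I would interpret the statement as: we keep drawing samples until we have $t$ good ones, and with probability $\ge 1-\exp(-t)$ at most $O(t)$ draws suffice (by a Chernoff bound on $\mathrm{Binomial}(O(t),1/2)$, the number of successes is $\ge t$ except with probability $\exp(-\Omega(t))$), so the total work is still $O(\NSize\dim(\FlatA)+nt)$ with probability $\ge 1-\exp(-t)$. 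Discarding the bad samples is free.

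\textbf{Main obstacle.} The only genuinely delicate point is verifying that the bucketing yields a legitimate $(s,\FlatA)$-sample in the sense of the definition — i.e.\ that after rounding within buckets, the probability assigned to each $p$ is still \emph{at least} $\constA'\rho_p$ for a single global constant $\constA'$ (so it is a valid $(s,\FlatA)$-sample), while the \emph{expected} total is only $O(s)$ so we do not blow up the size. This is exactly the tension the paragraph before the lemma is flagging, and it is resolved precisely because the within-bucket ratio is bounded by $2$: round every probability in bucket $i$ \emph{up} to $2^{-i+1}$, scale all of them by the same $\constA$, and then the lower bound $\Prob{p\text{ chosen}}\ge\constA\rho_p$ holds with $\constA'=\constA$ while the expectation is $\le 2\constA\sum_p\rho_p=2\constA$. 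Everything else — the running-time accounting and the Chernoff applications — is routine.
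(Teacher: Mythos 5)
Your proposal is correct and follows essentially the same route as the paper: compute all distances once in $O(\NSize\dim(\FlatA))$ time, reuse that preprocessing across all draws (each draw costing $O(n)$), and use Chernoff to show that $O(t)$ draws yield $t$ samples of valid size except with probability $\exp(-\Omega(t))$. The paper fixes the number of draws at $16t$ up front rather than phrasing it as ``draw until $t$ succeed,'' but the two formulations are equivalent, and the mid-paragraph detour where you first considered ``all $t$ of $t$ draws are good'' before correcting course is the only wobble --- the final argument is the right one.
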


\begin{proof}
    The first claim follows from the above description. As for the
    second claim, generate $16 t$ samples independently. Note, that we
    need to compute the distance of the points of $P$ to $\FlatA$ only
    once before we compute these $16t$ samples. By the Chernoff
    inequality, the probability that less than $t$ samples, out of the
    $16 t$ samples generated, would be in the valid size range is
    smaller than $\exp(-t)$.
\end{proof}

\subsection{Extracting the best $k$-flat lying inside %
   another flat}

\begin{lemma}
    \lemlab{flat:extract}%
    Let $\FlatA$ be a $u$-flat in $\Re^d$, and let $P$ be a set of $n$
    points in $\Re^d$. Then, one can extract the $k$-flat
    $\FlatB\subseteq \FlatA$ that minimizes $\Price{P}{\FlatB}$ in $O(
    \NSize u + n u^2)$ time, where $k \leq u$.
\end{lemma}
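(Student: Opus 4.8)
We want to find the best $k$-flat $\FlatB$ that lies inside the given $u$-flat $\FlatA$. The key observation is that once we restrict attention to $\FlatA$, the problem becomes a $k$-dimensional PCA/SVD problem in a $u$-dimensional ambient space, which is cheap because $u$ is small. Concretely, I would first compute an orthonormal basis $v_1,\dots,v_u$ for $\FlatA$ (e.g., via Gram–Schmidt on any given spanning set), which takes $O(d u^2)$ time, and arrange these vectors as the rows of a $u \times d$ matrix $V$ with $V V^T = I_u$. This is within the stated budget since $d u^2 \le \NSize u$ when $u \le n$ (and the interesting regime has $u \ll n$).

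**Projection step.** For each point $p \in P$, compute its coordinate vector $\widehat{p} = V p \in \Re^u$ with respect to this basis; this is $n$ matrix–vector products, costing $O(n d u) = O(\NSize u)$ time total. The point is that for any flat $\FlatB \subseteq \FlatA$, if $\widehat{\FlatB}$ denotes the image of $\FlatB$ under the coordinate map $x \mapsto Vx$, then by orthonormality of $V$ we have $\DistP{p}{\FlatB}^2 = \DistP{\,\widehat{p}\,}{\widehat{\FlatB}}^2 + \DistP{p}{\FlatA}^2$, where the second term is a constant independent of $\FlatB$ (it is the distance from $p$ to the whole flat $\FlatA$, which $\FlatB$ cannot reduce). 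Summing over $P$, minimizing $\Price{P}{\FlatB}$ over $k$-flats $\FlatB \subseteq \FlatA$ is therefore equivalent to minimizing $\sum_{p} \DistP{\widehat{p}}{\widehat{\FlatB}}^2$ over $k$-flats $\widehat{\FlatB}$ in $\Re^u$.

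**Solving the small instance and lifting back.** This reduced problem is exactly optimal rank-$k$ approximation of the $n \times u$ matrix $\widehat{P}$ whose rows are the $\widehat{p}$'s, solved by \SVD of $\widehat{P}$. Forming $\widehat{P}^T \widehat{P}$ (a $u \times u$ matrix) costs $O(n u^2)$, and its eigendecomposition costs $O(u^3)$; the top $k$ eigenvectors $w_1,\dots,w_k \in \Re^u$ span the optimal $\widehat{\FlatB}$. Lifting back, $\FlatB = \Span\{\, V^T w_1,\dots, V^T w_k \,\}$; computing these $k$ vectors costs $O(k d u) = O(\NSize u)$. Adding up, the total is $O(\NSize u + n u^2 + u^3)$, and since $u \le n$ gives $u^3 \le n u^2$, this is $O(\NSize u + n u^2)$ as claimed.

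**Where the work really is.** None of the individual steps is hard; the only thing that needs care is the orthogonal-decomposition identity $\DistP{p}{\FlatB}^2 = \DistP{\widehat{p}}{\widehat{\FlatB}}^2 + \DistP{p}{\FlatA}^2$, which justifies that we lose nothing by passing to the $u$-dimensional coordinates — this follows from the Pythagorean theorem applied to the orthogonal decomposition of $p$ into its component in $\FlatA$ and its component orthogonal to $\FlatA$, together with the fact that $\FlatB \subseteq \FlatA$ so the orthogonal component is ``invisible'' to $\FlatB$. I would also double-check the bucketing of the running-time terms against the hypothesis $k \le u$ (so that we never pay more than $O(\NSize u)$ for handling the $k$ output directions).
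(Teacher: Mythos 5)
Your proof is correct and follows essentially the same route as the paper's: project $P$ onto $\FlatA$, use the Pythagorean decomposition $\DistP{p}{\FlatB}^2 = \DistP{p}{\FlatA}^2 + \DistP{p'}{\FlatB}^2$ (which holds since $\FlatB \subseteq \FlatA$) to reduce the optimization to a $u$-dimensional instance, and solve that instance by \SVD in $O(nu^2)$ time. You merely spell out the orthonormal-basis bookkeeping and the $O(u^3)$ eigendecomposition term more explicitly than the paper, which leaves these as ``elementary argumentation.''
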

\begin{proof}
    We project the points of $P$ onto $\FlatA$. This takes $O( \NSize
    u)$ time, and let $P'$ denote the projected set. Next, we compute,
    using \SVD, the best approximation to $P'$ by a $k$-flat $\FlatB$
    inside $\FlatA$. Since $\FlatA$ is $u$ dimensional, we can treat
    the point of $P'$ as being $u$ dimensional and this can be done in
    $O(nu^2)$ time overall.

    For a point $p \in P$, let $p'$ be its projection onto $\FlatA$
    and let $p''$ be the projection of $p$ onto $\FlatB$. By
    elementary argumentation, we know that the projection of $p'$ onto
    $\FlatB$ is $p''$. As such, $\dist{p - p''}^2 = \dist{p - p'}^2 +
    \distNS{p' - p''}^2$ implying that $\Price{P}{\FlatB} =
    \Price{P}{\FlatA} + \Price{P'}{\FlatB}$. Since we computed the
    flat $\FlatB$ that minimizes $\Price{P'}{\FlatB}$, it follows that
    we computed the $k$-flat $\FlatB$ lying inside $\FlatA$ that
    minimizes the approximation price $\Price{P}{\FlatB}$.
\end{proof}

\subsection{Fast additive approximation}
\seclab{fast:additive}

We need the following result (the original result of
\cite{drvw-mapcv-06} is in fact slightly stronger).
\begin{theorem}
    \thmlab{sample:a}%
    Let $P$ be a set of $n$ points in $\Re^d$,  $\FlatA$ be a
    $u$-flat, $\RSample$ be a $(s,\FlatA)$-sample of size $r$, and
    let $\FlatB = \Span(\FlatA \cup \RSample)$. Let $\FlatC$ be the
    $k$-flat lying inside $\FlatB$ that minimizes $\Price{P}{\FlatC}$.
    Then, we have that
    \begin{align*}
    \Ex{ \Price{P} {\FlatC} } \leq \PrcOpt{P}{k} + \frac{k}{r}
    \Price{P}{\FlatA}.
    \end{align*}
    Furthermore, $\FlatC$ can be computed in $O( \NSize (r + u) )$
    time. 
\end{theorem}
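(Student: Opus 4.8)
The plan is to carry out the adaptive-sampling argument of Frieze \etal{} \cite{fkv-fmcaf-04} and Deshpande \etal{} \cite{drvw-mapcv-06} in the present geometric notation (one may also just invoke \cite{drvw-mapcv-06}). First I would pass to the matrix picture: let $\MatA$ be the matrix whose rows are the points of $P$, so that $\Price{P}{\FlatA} = \Fnorm{E}^2$ for the residual matrix $E$ with $i$-th row $E_i = \MatA_i - \proj_{\FlatA}(\MatA_i)$, and $\PrcOpt{P}{k} = \Fnorm{\MatA - \MatA_k}^2$, where $\MatA_k$ is the best rank-$k$ approximation. Since $\FlatC$ is \emph{defined} to be the cheapest $k$-flat inside $\FlatB$, it suffices to exhibit a single random $k$-flat $\FlatD \subseteq \FlatB$ and bound $\Ex{\Price{P}{\FlatD}}$; then $\Ex{\Price{P}{\FlatC}} \le \Ex{\Price{P}{\FlatD}}$.

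The flat $\FlatD$ will be spanned by unbiased estimators of the top $k$ right singular vectors. Write $\MatA = \sum_t \sigma_t u_t v_t^{\top}$ in SVD form, so $\FlatOpt = \Span\pth{v_1, \dots, v_k}$ (indices $j \le k$ with $\sigma_j = 0$ contribute nothing and may be dropped). From $\sigma_j v_j = \sum_i (u_j)_i \MatA_i$ together with $\MatA_i = \proj_{\FlatA}(\MatA_i) + E_i$ one reads off
\begin{align*}
    \sigma_j \pth{ v_j - \proj_{\FlatA}(v_j) } \;=\; \sum_i (u_j)_i\, E_i \;=:\; \xi_j ,
\end{align*}
whose right-hand side is a linear combination of residual rows, hence estimable from the sample. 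Let $\rho_p$ be the probability with which $p$ was selected; by the defining property of an $\pth{s,\FlatA}$-sample, $\rho_p \ge \constA\, \DistPSQ{p}{\FlatA} / \Price{P}{\FlatA}$, with $\constA$ calibrated to the sample size $r$. Put $\hat\xi_j = \sum_{p \in \RSample} \tfrac{(u_j)_p}{\rho_p} E_p$ (up to normalization by the sample size) and $\hat v_j = \proj_{\FlatA}(v_j) + \tfrac{1}{\sigma_j}\hat\xi_j$. Three facts must be checked: (i) $\hat v_j \in \FlatB$, since $\proj_{\FlatA}(v_j) \in \FlatA \subseteq \FlatB$ and each $E_p = \MatA_p - \proj_{\FlatA}(\MatA_p)$ lies in $\FlatB$ (because $\MatA_p \in \RSample \subseteq \FlatB$); (ii) $\Ex{\hat v_j} = v_j$; and (iii) $\Ex{ \dist{\hat v_j - v_j}^2 } \le \Price{P}{\FlatA} / \pth{ r\, \sigma_j^2 }$ — this is exactly where sampling proportionally to $\DistPSQ{\cdot}{\FlatA}$ pays off, collapsing $\sum_i (u_j)_i^2\, \DistPSQ{\MatA_i}{\FlatA} / \rho_i$ to a multiple of $\Price{P}{\FlatA}$ (using $\sum_i (u_j)_i^2 = 1$).

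Finally I would bound the price of $\FlatD = \Span\pth{ \hat v_1, \dots, \hat v_k }$. Since $\sum_j \langle \MatA_i, v_j\rangle \hat v_j \in \FlatD$ for every $i$,
\begin{align*}
    \Price{P}{\FlatC} \;\le\; \Price{P}{\FlatD} \;\le\; \sum_i \Bigl\| \MatA_i - {\textstyle\sum_j} \langle \MatA_i, v_j \rangle \hat v_j \Bigr\|^2 ,
\end{align*}
and writing $\MatA_i - \sum_j \langle \MatA_i, v_j \rangle \hat v_j = \pth{ \MatA_i - \proj_{\FlatOpt}(\MatA_i) } + \sum_j \langle \MatA_i, v_j\rangle \pth{ v_j - \hat v_j }$ and expanding the square, the cross term vanishes in expectation by (ii). The crucial step is the quadratic term, which must be summed over the points \emph{before} being estimated: $\sum_i \bigl\| \sum_j \langle \MatA_i, v_j\rangle \pth{v_j - \hat v_j} \bigr\|^2 = \sum_{j,j'} \langle v_j - \hat v_j,\, v_{j'} - \hat v_{j'}\rangle\, \langle \MatA v_j,\, \MatA v_{j'}\rangle$, and since $\MatA v_j = \sigma_j u_j$ with the $u_j$ orthonormal, the double sum collapses to $\sum_j \sigma_j^2 \dist{ v_j - \hat v_j }^2$, whose expectation is at most $\sum_{j=1}^k \Price{P}{\FlatA}/r = \tfrac{k}{r}\Price{P}{\FlatA}$ by (iii), the $\sigma_j^2$ cancelling the $1/\sigma_j^2$. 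As $\sum_i \dist{\MatA_i - \proj_{\FlatOpt}(\MatA_i)}^2 = \PrcOpt{P}{k}$, combining gives $\Ex{\Price{P}{\FlatC}} \le \PrcOpt{P}{k} + \tfrac{k}{r}\Price{P}{\FlatA}$. For the running time, $\FlatB$ is spanned by at most $u + r$ vectors of $\Re^d$, so $\dim \FlatB \le u + r$ and $\dim\FlatB \le d$, whence \lemref{flat:extract} extracts $\FlatC$ in $O\pth{ \NSize(u+r) + n(u+r)^2 } = O\pth{ \NSize(u+r) }$ time.

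The main obstacle is the quadratic-term estimate: a direct triangle-inequality or Cauchy--Schwarz bound on $\bigl\| \sum_j \langle \MatA_i, v_j\rangle (v_j - \hat v_j) \bigr\|^2$ loses a spurious factor of $k$, so one genuinely must sum over $i$ first and exploit orthonormality of the singular vectors. The other delicate point is calibrating $\constA$ in the sampling definition so that (iii) carries the clean factor $1/r$ rather than $O(1)/r$; this is precisely the purpose of the $\pth{s,\FlatA}$-sample construction.
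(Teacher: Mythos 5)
Your proposal is correct, but it differs from the paper's ``proof'' in one important respect: the paper does not actually prove the expectation bound at all --- it disposes of it in a single sentence, ``The correctness follows immediately from the work of Deshpande \etal{} \cite{drvw-mapcv-06},'' and then spends the remainder of the proof on the running-time claim (project onto $\FlatA$, form $\FlatB = \Span(\FlatA\cup\RSample)$, invoke \lemref{flat:extract}, and observe $r+u\le 2d$ so that $n(r+u)^2 = O(\NSize(r+u))$). Your running-time argument is identical to the paper's. What you add is a from-scratch reconstruction of the adaptive-sampling bound that the paper delegates to the citation: building unbiased estimators $\hat v_j = \proj_\FlatA(v_j) + \sigma_j^{-1}\hat\xi_j$ of the top-$k$ right singular vectors, checking that the candidate flat $\FlatD = \Span(\hat v_1,\ldots,\hat v_k)$ lies in $\FlatB$, and collapsing the double sum $\sum_{j,j'}\langle v_j-\hat v_j, v_{j'}-\hat v_{j'}\rangle\langle\MatA v_j,\MatA v_{j'}\rangle$ to $\sum_j\sigma_j^2\dist{v_j-\hat v_j}^2$ via orthonormality of the $u_j$. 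This is essentially the Frieze--Kannan--Vempala / Deshpande--Rademacher--Vempala--Wang argument, so your route is ``the same'' as the paper's at the level of ideas, just carried out rather than cited; the comparison buys the reader a self-contained paper at the cost of a page.

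One small point worth flagging: your estimator $\hat\xi_j = \sum_{p\in\RSample}\frac{(u_j)_p}{\rho_p}E_p$ ``up to normalization by the sample size'' conflates two sampling models. Normalizing by the number of draws is what one does for i.i.d.\ sampling \emph{with} replacement (the model in \cite{drvw-mapcv-06}); the paper's $(s,\FlatA)$-sample instead includes each point independently with probability $\rho_p$, in which case the \emph{unnormalized} Horvitz--Thompson sum is the unbiased estimator and the $1/r$ factor in (iii) comes from $\constA$ being of the same order as the realized sample size $r$ (expected size $\Theta(s)$ with $r\in[s,8s]$ w.h.p.). Either model delivers (iii) up to constants, and the theorem as stated is itself somewhat loose about which $r$ appears in the denominator (realized size vs.\ the calibration constant $\constA$), so this is not a gap --- but it is the one place where your sketch, read literally, does not quite match the paper's sampling definition.
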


\begin{proof}
    The correctness follows immediately from the work of Deshpande
    \etal{} \cite{drvw-mapcv-06}.

    As for the running time, we compute the distance of the points of
    $P$ to $\FlatA$. This takes $O( \NSize u )$ time.  Next, we
    compute $\FlatB = \Span(\FlatA \cup \RSample)$.  Finally, we
    compute the best $k$-flat $\FlatC \subseteq \FlatB$ that
    approximates $P$ using \lemref{flat:extract}, which takes
    $O(\NSize (r+u) + n(r+u)^2) = O(\NSize (r+u))$ since $r+u \leq
    2d$.
\end{proof}

We need a probability guaranteed version of \thmref{sample:a}, as
follows.
\begin{lemma}
    \lemlab{guaranteed}%
    Let $P$ be a set of $n$ points in $\Re^d$, $\FlatA$ be a
    $O(k)$-flat, and parameters $\delta >0$ and $\eps > 0$. Then, one
    can compute a $k$-flat $\FlatC$, such that, with probability at
    least $\geq 1-\delta$, we have $\Price{P} {\FlatC} \leq \beta$,
    where
    \begin{align*}    
    \beta = (1+\eps/2)\PrcOpt{P}{k} + (\eps/2) 
    \Price{P}{\FlatA}.
    \end{align*}
    The running time of the algorithm is $O( \NSize (k/\eps) \log
    (1/\delta))$. The algorithm succeeds with probability $\geq
    1-\delta$.
\end{lemma}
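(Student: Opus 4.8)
The plan is to run \thmref{sample:a} with the sample-size parameter set to $s=\Theta(k/\eps)$ — which by itself only controls $\Ex{\Price{P}{\FlatC}}$ — and then to amplify this to a $(1-\delta)$-confidence statement by performing $\Theta(\log(1/\delta))$ independent trials and returning the flat of smallest price among those produced. The two things one must watch are that neither the repeated sampling nor the final selection should cost more than $O(\NSize(k/\eps)\log(1/\delta))$.

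Concretely, set $s=\ceil{4k/\eps}$ and $m=\ceil{\log_2(2/\delta)}$. Using the second part of \lemref{weighted:sample} with $t=m$ (and $\dim(\FlatA)=O(k)$ since $\FlatA$ is an $O(k)$-flat), generate $m$ independent $(s,\FlatA)$-samples $\RSample_1,\dots,\RSample_m$, all of size in $[s,8s]$; this runs in $O(\NSize k + n m)$ time, and both the time bound and the size bounds hold with probability $\geq 1-\exp(-m)\geq 1-\delta/2$ (if this step fails we abort, and the time bound can be made deterministic by truncating any oversized sample). For each $i$ apply \thmref{sample:a} to $\RSample_i$: with $\FlatB_i=\Span(\FlatA\cup\RSample_i)$, let $\FlatC_i\subseteq\FlatB_i$ be the $k$-flat minimizing $\Price{P}{\FlatC_i}$, extracted via \lemref{flat:extract}. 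Since $\cardin{\RSample_i}\leq 8s=O(k/\eps)$ and $\dim(\FlatA)=O(k)$, each invocation costs $O(\NSize(k/\eps))$, hence $O(\NSize(k/\eps)\log(1/\delta))$ over all $i$.

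For the probability bound, fix $i$ and condition on the realized size $r_i=\cardin{\RSample_i}\in[s,8s]$. By \thmref{sample:a}, $\Ex{\Price{P}{\FlatC_i}}\leq \PrcOpt{P}{k}+(k/r_i)\Price{P}{\FlatA}\leq \PrcOpt{P}{k}+(\eps/4)\Price{P}{\FlatA}$, using $r_i\geq s\geq 4k/\eps$. As $\FlatC_i$ is a $k$-flat, $\Price{P}{\FlatC_i}\geq\PrcOpt{P}{k}$ holds deterministically, so $\Price{P}{\FlatC_i}-\PrcOpt{P}{k}$ is a nonnegative random variable of expectation at most $(\eps/4)\Price{P}{\FlatA}\leq(\eps/4)\pth{\PrcOpt{P}{k}+\Price{P}{\FlatA}}$. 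Since $\beta-\PrcOpt{P}{k}=(\eps/2)\pth{\PrcOpt{P}{k}+\Price{P}{\FlatA}}$, Markov's inequality yields
\begin{align*}
\Prob{\Price{P}{\FlatC_i}\geq\beta} = \Prob{\Price{P}{\FlatC_i}-\PrcOpt{P}{k}\geq (\eps/2)\pth{\PrcOpt{P}{k}+\Price{P}{\FlatA}}} \leq \frac{1}{2}.
\end{align*}
As the $\RSample_i$ are independent, with probability $\geq 1-2^{-m}\geq 1-\delta/2$ at least one $\FlatC_i$ satisfies $\Price{P}{\FlatC_i}\leq\beta$. The algorithm then evaluates $\Price{P}{\FlatC_i}$ exactly for every $i$ — each costs $O(\NSize k)$ since $\FlatC_i$ is a $k$-flat, so $O(\NSize k\log(1/\delta))$ in all — and outputs the minimizer; if some $\FlatC_j$ meets the bound, so does the output. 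A union bound over the failure of the batch sampling and the failure of all $m$ trials gives overall success probability $\geq 1-\delta$, and the running time is dominated by the $m$ applications of \thmref{sample:a}, namely $O(\NSize(k/\eps)\log(1/\delta))$.

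The step I expect to need the most care is converting the in-expectation guarantee of \thmref{sample:a} into a probability guarantee with no extra factor in the running time: a direct application of Markov to $\Price{P}{\FlatC_i}$ is useless since its expectation is a constant multiple of $\beta$ rather than small, and the fix is to subtract off $\PrcOpt{P}{k}$ first — legitimate precisely because $\FlatC_i$ is always a feasible $k$-flat. The other pitfall is cost: reusing the single distances-to-$\FlatA$ computation across all $m$ samples (the batch part of \lemref{weighted:sample}) is what keeps the sampling at $O(\NSize k)$ rather than $O(\NSize k\log(1/\delta))$. One should also verify that conditioning each $\RSample_i$ on its size lying in $[s,8s]$ does not invalidate the hypotheses of \thmref{sample:a}; reading that theorem's bound as a statement about each realized sample size makes this routine.
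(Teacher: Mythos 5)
Your proof is correct, and it takes a genuinely different — and in fact sharper — route than the paper's at the key probability-amplification step. Both proofs set $s=\Theta(k/\eps)$, generate independent $(s,\FlatA)$-samples, run \thmref{sample:a} on each, and return the cheapest resulting $k$-flat. The difference is in how Markov's inequality is deployed. The paper applies Markov directly to $\Price{P}{\FlatC_i}$ at threshold $(1+\eps/4)\Ex{\Price{P}{\FlatC_i}}$, which yields a per-trial success probability of only $1-\tfrac{1}{1+\eps/4}=\Theta(\eps)$, forcing $t=O(\eps^{-1}\log\delta^{-1})$ trials. You instead observe that $\Price{P}{\FlatC_i}-\PrcOpt{P}{k}$ is always nonnegative (since $\FlatC_i$ is a $k$-flat and $\PrcOpt{P}{k}$ is the optimum), has expectation at most $(\eps/4)\Price{P}{\FlatA}$, and that $\beta-\PrcOpt{P}{k}=(\eps/2)(\PrcOpt{P}{k}+\Price{P}{\FlatA})$ is at least twice that; Markov on the shifted variable then gives per-trial success probability $\geq 1/2$, independent of $\eps$, so $m=O(\log\delta^{-1})$ trials suffice. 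This is not merely cosmetic: each call to \thmref{sample:a} costs $O(\NSize\cdot(r+u))=O(\NSize k/\eps)$, so the paper's $t=O(\eps^{-1}\log\delta^{-1})$ trials would actually cost $O(\NSize k\eps^{-2}\log\delta^{-1})$, a factor $1/\eps$ above the stated bound (the paper writes the per-trial cost as $\NSize k$ rather than $\NSize k/\eps$, which undercounts the dimension of $\Span(\FlatA\cup\RSample_i)$). Your shift-before-Markov argument gives $O(\log\delta^{-1})\cdot O(\NSize k/\eps)=O(\NSize(k/\eps)\log\delta^{-1})$, exactly the claimed running time. The remaining details — batching the distance computation via \lemref{weighted:sample}, selecting the minimizer by explicit evaluation, and the union bound — are handled correctly and mirror the paper's bookkeeping.
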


\begin{proof}
    Set $s = 4k/\eps$. Generate $t=O( (1/\eps) \ln (1/\delta))$
    independent $(s,\FlatA)$-samples, of size in the range $[s,8s]$,
    using the algorithm of \lemref{weighted:sample}.  Next, for each
    random sample $\RSample_i$, we apply the algorithm of
    \thmref{sample:a} to it, generating a $k$-flat $\FlatC_i$, for
    $i=1,\ldots, t$.  By the Markov inequality, we have
    \begin{align*}
    \Prob{ \Bigl. \Price{P} {\FlatC_i} \geq (1+\eps/4) \Ex{ \Price{P}
          {\FlatC_i} } } \leq \frac{1}{1+\eps/4}.
    \end{align*}
    As such, with probability $\geq 1 - 1/(1+\eps/4) \geq 1 -
    (1-\eps/8) = \eps/8$ we have
    \begin{eqnarray*}
        \alpha_i = \Price{P} {\FlatC_i} &\leq& 
        \pth{1 + \frac{\eps}{4}} \pth{
           \PrcOpt{P}{k} + \frac{k}{s}
           \Price{P}{\FlatA}} \\
        &= &
        \pth{1 + \frac{\eps}{4}}\PrcOpt{P}{k} +
        \pth{1 + \frac{\eps}{4}} \cdot
        \frac{\eps}{4}
        \Price{P}{\FlatA} \\
        &\leq& \beta.
    \end{eqnarray*}
    Clearly, the best $k$-flat computed, which realizes the price
    $\displaystyle \min_{i=1}^t \alpha_i$, has a price which is at
    most $\beta$, and this holds with probability $\geq 1-
    (1-\eps/8)^t \geq 1- \delta/2$.  Note, that the time to generate
    the samples is $O( \NSize \dim(\FlatA) + n \cdot t)$, and this
    bounds holds with probability $\geq 1-\delta/4$. As such, the
    overall running time of the algorithm is $O( \NSize \dim(\FlatA) +
    n \cdot t + \NSize k t ) = O( \NSize (k/\eps) \ln (1/\delta))$.
\end{proof}

\begin{lemma}
    \lemlab{easy}%
    Let $\FlatA$ be a given $O(k)$-flat, $P$ a set of points $n$ in
    $\Re^d$, and $c$ a given parameter. Assume that $\Price{P}{\FlatA}
    \leq c \cdot \PrcOpt{P}{k}$. Then, one can compute a $k$-flat
    $\FlatB$ such that
    \begin{align*}
    \Price{P}{\FlatB} \leq (1+\eps)\PrcOpt{P}{k}.
    \end{align*}
    The running time of the algorithm is $O( \NSize \cdot (k/\eps)
    \log (c/\delta))$ and it succeeds with probability $\geq
    1-\delta$.
\end{lemma}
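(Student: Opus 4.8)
The plan is to reduce the multiplicative bound to the hybrid bound already provided by \lemref{guaranteed}, by iterating that lemma so as to drive down the coefficient on $\Price{P}{\FlatA}$. Recall that \lemref{guaranteed} takes an $O(k)$-flat and a target $\eps'$ and produces, with probability $\geq 1-\delta'$, a $k$-flat of price at most $(1+\eps'/2)\PrcOpt{P}{k} + (\eps'/2)\Price{P}{\FlatA}$. The key observation is that the output flat $\FlatB'$ of one invocation is itself a $k$-flat (hence an $O(k)$-flat) with $\Price{P}{\FlatB'} \leq (1 + \eps'/2)\PrcOpt{P}{k} + (\eps'/2)\Price{P}{\FlatA} \leq (1 + \eps'/2 + c\eps'/2)\PrcOpt{P}{k}$, so it can serve as the input flat for the next invocation, with a \emph{smaller} constant $c' = 1 + \eps'/2 + c\eps'/2$ in the hypothesis $\Price{P}{\FlatB'}\leq c'\PrcOpt{P}{k}$.

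First I would quantify this contraction. If we always run \lemref{guaranteed} with a fixed parameter, say $\eps' = 1$, then from $\Price{P}{\FlatA}\leq c\cdot\PrcOpt{P}{k}$ we get a new flat with constant $c' \le 1 + (1+c)/2 = 3/2 + c/2$. Iterating $j$ times, the constant after $j$ rounds satisfies $c_j \le 3 + (c-3)/2^{j}$ (solving the linear recurrence $c_{j} = 3/2 + c_{j-1}/2$ with fixed point $3$). Hence after $O(\log c)$ rounds we reach a flat $\FlatA'$ with $\Price{P}{\FlatA'} \leq 4\,\PrcOpt{P}{k}$, i.e.\ a \emph{constant}-factor approximation, using total time $O(\log c)$ times $O(\NSize\cdot k\cdot \log(1/\delta'))$. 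At that point a single final call to \lemref{guaranteed} with parameter $\eps$ on the flat $\FlatA'$ (which has $c = 4$) yields a $k$-flat $\FlatB$ with price at most $(1+\eps/2)\PrcOpt{P}{k} + (\eps/2)\cdot 4\,\PrcOpt{P}{k}$... which is \emph{not} quite $(1+\eps)\PrcOpt{P}{k}$. So the last step needs a twist.

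The fix is to replace the final single call by one more short geometric descent, but now scaling $\eps'$ with the current constant: once $\Price{P}{\FlatA'}\le 4\PrcOpt{P}{k}$, call \lemref{guaranteed} with error parameter $\eps'' = \Theta(\eps)$ chosen small enough that $(\eps''/2)\cdot 4 \le \eps/2$, i.e.\ $\eps'' \le \eps/4$; this gives a flat of price $(1+\eps''/2 + 2\eps'')\PrcOpt{P}{k} \le (1+\eps)\PrcOpt{P}{k}$, as required, in time $O(\NSize (k/\eps)\log(1/\delta''))$. Setting $\delta'' = \delta/2$ and splitting the remaining $\delta/2$ evenly among the $O(\log c)$ descent rounds (so each uses $\delta' = \Theta(\delta/\log c)$, contributing $\log(1/\delta') = O(\log(\log(c)/\delta)) = O(\log(c/\delta))$ to each round's cost), a union bound gives overall success probability $\ge 1-\delta$, and the total running time is the dominant last call plus $O(\log c)\cdot O(\NSize\, k\,\log(c/\delta)) = O(\NSize (k/\eps)\log(c/\delta))$.

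The main obstacle I anticipate is purely bookkeeping: making sure the per-round failure probabilities and the scaling of $\eps'$ across rounds are coordinated so that the final bound is exactly $(1+\eps)\PrcOpt{P}{k}$ (not $(1+O(\eps))$) and the running time collapses to the claimed $O(\NSize (k/\eps)\log(c/\delta))$ rather than, say, $O(\NSize (k/\eps)\log(c/\delta)\log c)$. The latter is handled by observing that only the final call carries the $1/\eps$ factor — all $O(\log c)$ descent rounds use the \emph{constant} error parameter $\eps'=1$, so each costs only $O(\NSize\, k\,\log(c/\delta))$, and $\log c$ of them cost $O(\NSize\, k\,\log(c/\delta)\log c)$, which is subsumed by $O(\NSize (k/\eps)\log(c/\delta))$ as long as $\log c = O(1/\eps)$; should that fail, one simply notes $c/\eps$ absorbs it, or one makes the descent itself geometric in the constant (halving $c-3$ each round already does this) so that $\log c$ rounds genuinely suffice. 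No step requires any new geometric input beyond \lemref{guaranteed}.
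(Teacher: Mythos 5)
Your high-level plan --- a geometric descent that drives the hypothesis constant $c$ down to $O(1)$ using a constant error parameter, followed by one final fine-grained call with error $\Theta(\eps)$ --- is essentially the same structure the paper uses. Where you diverge is in how the failure probability is distributed across rounds, and this is where a genuine gap appears.

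You allocate failure probability $\delta' = \Theta(\delta/\log c)$ to each of the $O(\log c)$ descent rounds and union-bound. Since \lemref{guaranteed} with error parameter $\eps'=1$ and failure probability $\delta'$ costs $O(\NSize\, k\, \log(1/\delta'))$, your descent costs $O\bigl(\NSize\, k\, \log c\cdot\log(\log c/\delta)\bigr)$, and this is \emph{not} subsumed by the claimed $O(\NSize (k/\eps)\log(c/\delta))$ when $\log c \gg 1/\eps$ (e.g.\ $\eps$ constant and $c$ super-polynomial). The two escape hatches you mention do not close the hole: ``$c/\eps$ absorbs it'' confuses $c$ with $\log c$ (only $\log(c/\delta)$, not $c/\eps$, appears in the claimed running time), and ``make the descent geometric'' is already what you are doing --- the number of rounds is not the problem, the per-round cost is, and it carries the parasitic $\log\log c$ (or worse) factor precisely because you charged $\delta/\log c$ per round.

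The paper's proof sidesteps this by not amplifying per round at all. Each iteration makes a \emph{single} call to \thmref{sample:a} (cost $O(\NSize k)$, no $\log(1/\delta')$ factor) and keeps the better of the old and new flats; thus a failed round only stalls, never regresses. If the current flat still has price $\ge 8\,\PrcOpt{P}{k}$, Markov gives a halving with probability $\ge 1/2$, so by a single Chernoff bound over $t = O(\log(c/\delta))$ rounds, enough halvings occur to reach price $\le 8\,\PrcOpt{P}{k}$ with probability $\ge 1-\delta/10$. That makes the descent cost exactly $O(\NSize\, k\, \log(c/\delta))$, and a final call to \lemref{guaranteed} with $\eps/10$ and failure $\delta/10$ does the rest. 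You can rescue your iterate-\lemref{guaranteed} variant the same way (constant per-round $\delta'$, keep-best, one Chernoff over $O(\log(c/\delta))$ rounds), but as currently written the running-time analysis does not establish the claimed bound.
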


\begin{proof}
    Let $t= 16 \ceil{ \lg (c/\delta) }$, and set $\FlatA_0 = \FlatA$.
    In the $i$\th step of the algorithm, we generate a
    $(s,\FlatA_{i-1})$-sample $\RSample_i$ with $s=20k$.  If the
    generated sample is not of size in the range $[s,8s]$, we
    resample.  Next, we apply \thmref{sample:a} to $\FlatA_{i-1}$ and
    $\RSample_i$.  If the new generated flat $\FlatB_i$ is more
    expensive than the $\FlatA_{i-1}$ then we set $\FlatA_{i} =
    \FlatA_{i-1}$.  Otherwise, we set $\FlatA_{i} = \FlatB_i$. We
    perform this improvement loop $t$ times.

    Note, that if $\Price{\FlatA_{i-1}}{P} \geq 8 \PrcOpt{P}{k}$ then
    \begin{align*}
    \Ex{ \Price{P} {\FlatB_{i}} } \leq \PrcOpt{P}{k} + \frac{k}{20k}
    \Price{P}{\FlatA_{i-1}} \leq \pth{ \frac{1}{8} +
       \frac{1}{20}}\Price{P}{\FlatA_{i-1}} \leq
    \frac{1}{4}\Price{P}{\FlatA_{i-1}},
    \end{align*}
    by \thmref{sample:a}.  By Markov's inequality, it follows that
    $\Prob{ \Price{P} {\FlatB_{i}} \leq
       \frac{1}{2}\Price{P}{\FlatA_{i-1}}} \geq 1/2$. Namely, with
    probability half, we shrink the price of the current $k$-flat by a
    factor of $2$ (in the worst case the price remains the same).
    Since each iteration succeeds with probability half, and we
    perform $t$ iterations, it follows that with probability larger
    than $1-\delta/10$ (by the Chernoff inequality), the price of the
    last flat $\FlatA_t$ computed is $\leq 8 \PrcOpt{P}{k}$.
    Similarly, we know by the Chernoff inequality that the algorithm
    performed at most $4t$ sampling rounds till it got the required
    $t$ good samples, and this holds with probability $\geq
    1-\delta/10$.

    Thus, computing $\FlatA_t$ takes $O( \NSize{} k t ) = O( \NSize \,
    k \log (c/\delta))$ time.  Finally, we apply \lemref{guaranteed}
    to $\FlatA_t$ with $\eps/10$ and success probability $\geq 1-
    \delta/10$. The resulting $k$-flat has price
    \begin{eqnarray*}
        \Price{P}{\FlatC} &\leq& (1+\eps/20)\PrcOpt{P}{k} + (\eps/20)
        \Price{P}{\FlatA_t} \leq  (1+\eps/20 + 8(\eps/20))\PrcOpt{P}{k}\\
        &\leq& ( 1+\eps) \PrcOpt{P}{k},
    \end{eqnarray*}
    as required. Note, that overall this algorithm succeeds with
    probability $\geq 1-\delta$.
\end{proof} 

\begin{remark}
    \remlab{slow:but:easy}%
    Har-Peled and Varadarajan \cite{hv-hdsfl-04} presented an
    algorithm that computes, in $O(\NSize k)$ time, a $k$-dimensional
    affine subspace (i.e., this subspace does not necessarily contains
    the origin) which approximates, up to a factor of $k!$, the best
    such subspace that minimizes the maximum distance of a point of
    $P$ to such a subspace. Namely, this is an approximation algorithm
    for the $L_\infty$-fitting problem, and let $\FlatA$ denote the
    computed $(k+1)$-flat (i.e., we turn the $k$-dimensional affine
    subspace into a $(k+1)$-flat by adding the origin to it).

    It is easy to verify that $\FlatA$ provides a $O\pth{ (n k!)^2 }$
    approximation to the optimal $k$-flat realizing $\PrcOpt{P}{k}$
    (i.e., the squared $L_2$-fitting problem).  To see that, we remind
    the reader that the $L_\infty$ and $L_2$ metrics are the same up
    to a factor which is bounded by the dimension (which is $n$ in
    this case).

    We can now use \lemref{easy} starting with $\FlatA$, with $c=
    O\pth{ (n k!)^2 }$. We get the required $(1+\eps)$-approximation,
    and the running time is $O( \NSize (k/\eps) (k \log k + \log
    (n/\delta)))$.  The algorithm succeeds with probability $\geq
    1-\delta$.
\end{remark}

\subsection{Additional tools}

\begin{lemma}
    \lemlab{distance:2}%
    Let $\DistPSQ{x}{\FlatA} =\min_{x \in \FlatA} \dist{ q -x}^2$ be
    the squared distance function of a point $x$ to a given $u$-flat
    $\FlatA$. Let $x, y \in \Re^d$ be any two points. We have:
    \begin{itemize}
        \item[(i)] For any $0 \leq \beta \leq 1$, $\DistPSQ{\beta x +
           (1 - \beta) y}{\FlatA} \leq \beta \DistPSQ{x}{\FlatA} + (1
        - \beta) \DistPSQ{y}{\FlatA}$. That is,
        $\DistPSQ{\cdot}{\FlatA}$ is convex.
        
        \item[(ii)] Let $\ell$ be the line through $x$ and $y$, and
        let $z$ be a point on $\ell$ such that $\distNS{x - z} \leq k
        \cdot \distNS{x - y}$. Then,
        \begin{align*}
        \DistPSQ{z}{\FlatA} \leq 9 k^2 \cdot \max \pth{
           \DistPSQ{x}{\FlatA}, \DistPSQ{y}{\FlatA}}.
        \end{align*}
    \end{itemize}
\end{lemma}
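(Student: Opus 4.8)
For part (i), the plan is to exploit the fact that $\DistP{\cdot}{\FlatA}$ is itself a convex function: it is the distance to a convex set (a linear subspace), hence convex, and it is nonnegative. The squared distance $\DistPSQ{\cdot}{\FlatA}$ is then the composition of the nondecreasing convex function $t \mapsto t^2$ on $[0,\infty)$ with the nonnegative convex function $\DistP{\cdot}{\FlatA}$, so it is convex. Concretely, writing $w = \beta x + (1-\beta) y$, one has $\DistP{w}{\FlatA} \leq \beta \DistP{x}{\FlatA} + (1-\beta)\DistP{y}{\FlatA}$, and squaring both sides (both are nonnegative) followed by the elementary inequality $(\beta a + (1-\beta) b)^2 \leq \beta a^2 + (1-\beta) b^2$ (which is just convexity of $t \mapsto t^2$, or the Cauchy--Schwarz / power-mean inequality) yields the claim. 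Alternatively, and perhaps cleanest, one can write $\DistP{w}{\FlatA} = \dist{w - \proj(w)}$ where $\proj$ is the orthogonal projection onto $\FlatA$, which is linear; then $w - \proj(w) = \beta(x - \proj(x)) + (1-\beta)(y - \proj(y))$, and convexity of $v \mapsto \dist{v}^2$ on $\Re^d$ gives $\dist{w - \proj(w)}^2 \leq \beta \dist{x - \proj(x)}^2 + (1-\beta)\dist{y - \proj(y)}^2 = \beta\DistPSQ{x}{\FlatA} + (1-\beta)\DistPSQ{y}{\FlatA}$. I expect this part to be routine.

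For part (ii), the idea is to reduce to part (i) by expressing $z$ as an affine combination of $x$ and $y$ and controlling the coefficients. Since $z$ lies on the line $\ell$ through $x$ and $y$ with $\dist{x-z} \leq k\dist{x-y}$, write $z = x + \lambda(y - x)$ with $|\lambda| \leq k$. I would split into cases according to where $z$ sits. If $0 \leq \lambda \leq 1$, then $z$ is a convex combination of $x$ and $y$ and part (i) gives $\DistPSQ{z}{\FlatA} \leq \max(\DistPSQ{x}{\FlatA}, \DistPSQ{y}{\FlatA})$, which is far stronger than needed. Otherwise $z$ is outside the segment; suppose first $\lambda > 1$ (the case $\lambda < 0$ is symmetric, swapping the roles of $x$ and $y$). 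Then $y$ is a convex combination of $x$ and $z$: $y = \tfrac{\lambda - 1}{\lambda} x + \tfrac{1}{\lambda} z$, so by part (i), $\DistPSQ{y}{\FlatA} \leq \tfrac{\lambda-1}{\lambda}\DistPSQ{x}{\FlatA} + \tfrac{1}{\lambda}\DistPSQ{z}{\FlatA}$. This is the wrong direction; instead I would use it after noting $\proj$ is linear and bounding directly: with $u = x - \proj(x)$ and $v = y - \proj(y)$, we have $z - \proj(z) = u + \lambda(v - u) = (1-\lambda)u + \lambda v$, so by the triangle inequality $\DistP{z}{\FlatA} \leq |1-\lambda|\,\dist{u} + |\lambda|\,\dist{v} \leq (|\lambda| + |1-\lambda|)\max(\dist{u}, \dist{v})$. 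Squaring, and using $|\lambda| \leq k$ together with $|1 - \lambda| \leq 1 + |\lambda| \leq 1 + k \leq 2k$ when $k \geq 1$, gives $\DistPSQ{z}{\FlatA} \leq (k + 2k)^2 \max(\DistPSQ{x}{\FlatA}, \DistPSQ{y}{\FlatA}) = 9k^2 \max(\DistPSQ{x}{\FlatA}, \DistPSQ{y}{\FlatA})$, which is exactly the stated bound.

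The main thing to get right is the bookkeeping on the coefficient bounds — in particular handling the case $\lambda < 0$ symmetrically and making sure the constant $9$ is not exceeded, which forces the somewhat lossy estimate $|1-\lambda| \leq 2k$ (valid for $k \geq 1$, which is the only regime of interest since $k$ is the target rank). I do not anticipate any genuine obstacle; the linearity of the orthogonal projection onto the linear subspace $\FlatA$ does all the real work, reducing both parts to elementary inequalities about norms in $\Re^d$.
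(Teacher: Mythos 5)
Your proof is correct. Part (i) is, at heart, the same argument as the paper's: the paper rotates so that $\FlatA$ is a coordinate subspace and observes that the squared distance along a line is a sum of squared affine functions, which is precisely your coordinate-free observation that orthogonal projection onto $\FlatA$ is linear and $v \mapsto \dist{v}^2$ is convex. For part (ii) your route is genuinely cleaner: you write $z = (1-\lambda)x + \lambda y$ with $|\lambda| \leq k$, use linearity of $\proj$ to get $z - \proj(z) = (1-\lambda)(x-\proj(x)) + \lambda(y-\proj(y))$, and bound $|\lambda| + |1-\lambda| \leq 3k$ for $k \geq 1$. The paper instead builds a geometric construction: translate $\ell$ to a parallel line $\ell'$ through $\proj(x)$, invoke similar triangles to get $\DistP{z'}{\FlatA} \leq k\,\DistP{y'}{\FlatA}$, and then chain triangle inequalities, arriving at the coefficient bound $(k+1)\DistP{x}{\FlatA} + k\,\DistP{y}{\FlatA} \leq 3k\max(\cdot,\cdot)$ — the same $3k$ you get, just derived geometrically. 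Your algebraic packaging avoids the auxiliary construction and makes the role of $k \geq 1$ transparent (both proofs need it), so I would regard it as a mild simplification rather than a different proof idea. One small remark: the case split on $\lambda$ that you start with is unnecessary — your final unified bound $\DistP{z}{\FlatA} \leq (|\lambda| + |1-\lambda|)\max(\dist{u},\dist{v})$ already covers all three cases at once.
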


\begin{proof}
    The claims of this lemma follow by easy elementary arguments and
    the proof is included only for the sake of completeness.

    (i) Rotate and translate space, such that $\FlatA$ becomes the
    $u$-flat spanned by the first $u$ unit vectors of the standard
    orthonormal basis of $\Re^d$, denoted by $\FlatA'$. Let $x'$ and
    $y'$ be the images of $x$ and $y$, respectively. Consider the line
    $\ell(t) = t x' + (1-t) y'$. We have that $g(t) =
    \DistPSQ{\ell(t)}{\FlatA'} = \sum_{i=t+1}^d (t x_i' + (1-t)
    y_i')^2$, which is a convex function since it is a sum of convex
    functions, where $x' = (x_1',\ldots, x_d')$ and $y' =
    (y_1',\ldots, y_d')$. As such, $g(0) = \DistPSQ{y}{\FlatA}$ and
    $g(1) = \DistPSQ{x}{\FlatA}$, and $g(\beta) = \DistPSQ{\beta x +
       (1 - \beta) y}{\FlatA}$. By the convexity of $g(\cdot)$, we
    have $\DistPSQ{\beta x + (1 - \beta) y}{\FlatA} = g(\beta) \leq
    \beta g(1) + (1-\beta) g(0) = \beta \DistPSQ{x}{\FlatA} + (1 -
    \beta)\DistPSQ{y} {\FlatA}$, as claimed.

    \parpic[r]{ \begin{minipage}{5.2cm} \includegraphics{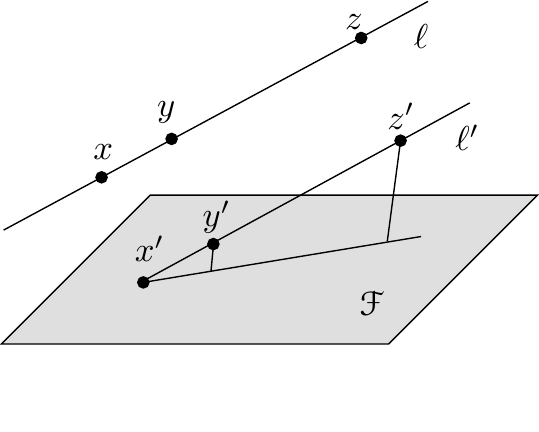}
           \vspace*{-4cm}
       \end{minipage}
    }
    (ii)  Let ${x}'$ be the projection of $x$ to $\FlatA$, and
    let $\ell'$ be the parallel line to $\ell$ passing through
    ${x}'$. Let $y' \in \ell'$ and $z'\in \ell'$ be the
    translation of $y$ and $z$ onto $\ell'$ by the vector $x'-x$. By
    similarity of triangles, we have
    \begin{align*}
    \DistP{z'}{\FlatA} = \frac{\dist{x' - z'}}{\dist{x'-y'}}
    \DistP{y'}{\FlatA} \leq k\cdot \DistP{y'}{\FlatA}.
    \end{align*}

    Thus,
    \begin{align*}
        \displaystyle \DistPSQ{z}{\FlatA}%
        &\leq%
        \pth{\dist{x' - x} + \DistP{z'}{\FlatA}}^2 \leq%
        \pth{\DistP{x}{\FlatA} + k \cdot \DistP{y'}{\FlatA}}^2%
        \\&%
        \leq%
        \pth{(k+1)\DistP{x}{\FlatA} + k \cdot \DistP{y}{\FlatA}}^2%
        \leq%
        9 k^2 \cdot \max \pth{ \DistPSQ{x}{\FlatA},
           \DistPSQ{y}{\FlatA}},
    \end{align*}
    since $\DistP{y'}{\FlatA} \leq \dist{ x- x'} + \DistP{y}{\FlatA} =
    \DistP{x}{\FlatA} + \DistP{y}{\FlatA}$.
\end{proof}

        


\noindent%
\begin{minipage}{0.65\linewidth}%
    \begin{lemma}
    \lemlab{extract:k:flat}%
    Let $P$ be a set of $n$ points in $\Re^d$, and let $\FlatA$ be a
    flat in $\Re^d$, such that $\Price{P}{\FlatA} \leq c \cdot
    \PrcOpt{P}{k}$, where $c\geq 1$ is a constant. Then, there exists
    a $k$-flat $\FlatB$ that lies inside $\FlatA$ such that
    $\Price{P}{\FlatB} \leq 5 c \cdot \PrcOpt{P}{k}$.
\end{lemma}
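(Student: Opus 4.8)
The plan is to exhibit an explicit $k$-flat inside $\FlatA$ with the required price, and then (implicitly) the \emph{best} $k$-flat inside $\FlatA$, which is what \lemref{flat:extract} would extract, can only be cheaper. The candidate I would take is the orthogonal projection of the optimal flat $\FlatOpt$ onto $\FlatA$. Writing $\proj_\FlatA$ for the orthogonal projection onto $\FlatA$ (a linear map, since $\FlatA$ is a linear subspace), the set $\FlatA_0 = \proj_\FlatA(\FlatOpt)$ is a linear subspace of $\FlatA$ of dimension at most $k$; I extend it arbitrarily to a $k$-flat $\FlatB \subseteq \FlatA$ (here one uses $\dim \FlatA \ge k$, which is implicit in the statement, as otherwise no $k$-flat lies inside $\FlatA$). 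Since $\FlatA_0 \subseteq \FlatB$, we have $\DistP{p}{\FlatB} \le \DistP{p}{\FlatA_0}$ for every $p$, so it suffices to bound the price of $\FlatA_0$.

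Next I would fix $p \in P$ and let $p' = \proj_\FlatA(p)$ be its nearest point on $\FlatA$ and $p^\star$ its nearest point on $\FlatOpt$, so that $\dist{p - p'} = \DistP{p}{\FlatA}$ and $\dist{p - p^\star} = \DistP{p}{\FlatOpt}$. Because $p^\star \in \FlatOpt$, the point $\proj_\FlatA(p^\star)$ lies in $\FlatA_0$, hence $\DistP{p}{\FlatA_0} \le \dist{p - \proj_\FlatA(p^\star)}$. By the triangle inequality this is at most $\dist{p - p'} + \dist{\proj_\FlatA(p) - \proj_\FlatA(p^\star)}$, and since an orthogonal projection onto a linear subspace is non-expansive (operator norm $1$), the second term is at most $\dist{p - p^\star}$. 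Thus $\DistP{p}{\FlatB} \le \DistP{p}{\FlatA} + \DistP{p}{\FlatOpt}$.

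Finally, squaring and using $(a+b)^2 \le 2a^2 + 2b^2$ gives $\DistPSQ{p}{\FlatB} \le 2\DistPSQ{p}{\FlatA} + 2\DistPSQ{p}{\FlatOpt}$; summing over $p \in P$ and using the hypothesis $\Price{P}{\FlatA} \le c \cdot \PrcOpt{P}{k}$ and $c \ge 1$ yields $\Price{P}{\FlatB} \le 2\Price{P}{\FlatA} + 2\PrcOpt{P}{k} \le (2c+2)\PrcOpt{P}{k} \le 5c\cdot\PrcOpt{P}{k}$. There is no serious obstacle here; the only step where the geometry genuinely enters is the non-expansiveness of $\proj_\FlatA$ together with the elementary fact that $\proj_\FlatA(p)$ is the nearest point of $\FlatA$ to $p$ — everything else is the triangle inequality and bookkeeping. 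The one thing worth stating carefully is that $\FlatA_0$ may have dimension strictly less than $k$, so one must remember to pad it up to a genuine $k$-flat inside $\FlatA$ before invoking the conclusion.
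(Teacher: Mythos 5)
Your proof is correct and follows essentially the same route as the paper's: both take the projection of $\FlatOpt$ onto $\FlatA$ as the candidate flat and both rely on non-expansiveness of orthogonal projection plus the triangle inequality. Your bookkeeping (bounding $\DistP{p}{\FlatB}$ directly rather than first splitting via Pythagoras into $\Price{P}{\FlatA} + \Price{P'}{\FlatB}$) is marginally cleaner and yields the slightly tighter constant $2c+2$ in place of the paper's $3c+2$, and you are right to note the padding to dimension $k$, which the paper leaves implicit.
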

\end{minipage}%
\hfill
\begin{minipage}{0.3\linewidth}%
    \includegraphics[width=1.01\linewidth]{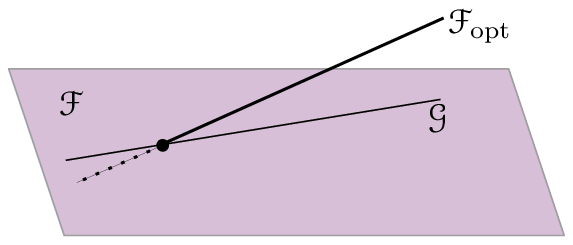}%
    \vspace*{-0.7cm}%
\end{minipage}%

\begin{proof}
    Let $\FlatB$ be the projection of $\FlatOpt$ onto $\FlatA$, where
    $\FlatOpt$ denotes the optimal $k$-flat that approximates $P$.
    Let $P'$ be the projection of $P$ onto $\FlatA$. We have that
    $\Price{P'}{\FlatB} \leq \Price{P'}{\FlatOpt}$. On the other hand,
    \begin{eqnarray*}
        \Price{P}{\FlatB} &=& \sum_{p \in P} \pth{ \pth{\DistP{p}{\FlatA}}^2
           + \pth{\DistP{p'}{\FlatB}}^2 } = \Price{P}{\FlatA} +
        \Price{P'}{\FlatB}
        \leq \Price{P}{\FlatA} + \Price{P'}{\FlatOpt},
    \end{eqnarray*}
    where $p'$ denotes the projection of a point $p \in P$ onto
    $\FlatA$. Furthermore,
    \begin{eqnarray*}
        \Price{P'}{\FlatOpt} &=& \sum_{p \in P} 
        \pth{\DistP{p'}{\FlatOpt} }^2 
        \leq \sum_{p \in P} 
        \pth{\dist{ p - p'} + \DistP{p}{\FlatOpt} }^2
        \leq 2 \sum_{p \in P'}\pth{ \pth{ \DistP{p}{\FlatA}}^2 + 
           \pth{ \DistP{p}{\FlatOpt} }^2}     \\
        &\leq & 2 \pth{\Price{P}{\FlatA} + \Price{P}{\FlatOpt}} 
        \leq (2c+2) \Price{P}{\FlatOpt}.
    \end{eqnarray*}
    We conclude that $\Price{P}{\FlatB} \leq \Price{P}{\FlatA} +
    \Price{P'}{\FlatOpt} \leq (c + 2c+2) \Price{P}{\FlatOpt} \leq 5c
    \cdot \PrcOpt{P}{k}$.
\end{proof}

\section{A sampling lemma}

The following lemma, testifies that a small random sample induces a
low-rank flat that is a good approximation for most points. The random
sample is picked by (uniformly) choosing a subset of the appropriate
size from the input set, among all subsets of this size.
\begin{lemma}
    \lemlab{r:sample}%
    Let $P$ be a set of $n$ points in $\Re^d$, $\delta >0$ a
    parameter, and let $\RSample$ be a random sample of $P$ of size
    $O\pth{ k^2 \log (k/\delta) }$. Consider the flat $\FlatA =
    \Span(\RSample)$, and let $Q$ be the set of the $(3/4)n$ points of
    $P$ which are closest to $\FlatA$.  Then $\Price{Q}{\FlatA} \leq
    \ConstSL k^2 \cdot \PrcOpt{P}{k}$ with probability $\geq
    1-\delta$.
\end{lemma}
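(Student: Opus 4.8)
I would reduce the statement to a sampling/$\eps$-net argument carried out on the optimal flat. Write $\Delta=\PrcOpt{P}{k}$, let $\FlatOpt$ be the optimal $k$-flat, and for $p\in P$ let $\bar p$ be the projection of $p$ onto $\FlatOpt$ and $e_p=p-\bar p$, so that $e_p\perp\FlatOpt$, $\distNS{e_p}^2=\DistPSQ{p}{\FlatOpt}$ and $\sum_{p\in P}\distNS{e_p}^2=\Delta$. Since $Q$ is \emph{defined} as the $(3/4)n$-subset of $P$ minimizing the sum of squared distances to $\FlatA$, it suffices to exhibit \emph{one} subset $Q'\subseteq P$ with $\cardin{Q'}\ge(3/4)n$ and $\Price{Q'}{\FlatA}\le\ConstSL k^2\Delta$; then $\Price{Q}{\FlatA}\le\Price{Q'}{\FlatA}$. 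Finally, for any $p$ the triangle inequality gives $\DistP{p}{\FlatA}\le\distNS{e_p}+\DistP{\bar p}{\FlatA}$ (project $p$ onto the point of $\FlatA$ nearest $\bar p$), hence $\DistPSQ{p}{\FlatA}\le 2\distNS{e_p}^2+2\DistPSQ{\bar p}{\FlatA}$, and the whole problem becomes: show that for all but $n/4$ of the points $p$, the \emph{projected} point $\bar p$ lies within squared distance $O(k^2\Delta/n)$ of $\FlatA$.

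\textbf{The $\eps$-net step.} View $\FlatOpt$ as $\Re^k$ and $\{\bar p:p\in P\}$ as a point set inside it. I would invoke the standard fact that a uniform random sample of size $O\pth{k^2\log(k/\delta)}$ is, with probability $\ge1-\delta$, a $(c/k)$-net for the range space of (linear) halfspaces in $\Re^k$, whose VC dimension is $O(k)$, for an absolute constant $c$; the net parameter $\Theta(1/k)$ is precisely what forces the sample size to be $\Omega(k^2)$ up to the logarithm. Using this net property I would show that $\RSample$ ``surrounds'' all but $n/4$ of the $\bar p$'s quantitatively: for each such $p$ one can write $\bar p$ as an affine combination of sampled projections, $\bar p=t\,a-(t-1)\,b$, where $a$ and $b$ lie in the convex hull of sampled projections $\{\bar r:r\in\RSample\}$ and the extrapolation factor satisfies $1\le t=O(k)$. (The natural reference objects here are a centerpoint of $\{\bar p\}$ — which the net property places inside the sampled convex hull by a separation argument — together with its Carathéodory expansion; the delicate part is keeping $t$ down to $O(k)$, which is where the $\Theta(1/k)$ net parameter is spent.)

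\textbf{Pulling back to $\FlatA$.} I would then bound $\DistPSQ{\bar p}{\FlatA}$ for a surviving $p$. For any sampled $r$ we have $\DistP{\bar r}{\FlatA}\le\distNS{\bar r-r}=\distNS{e_r}$ because $r\in\FlatA=\Span(\RSample)$; hence, by convexity of the squared distance (\lemref{distance:2}(i)), a point $w=\sum_r\beta_r\bar r$ ($\beta_r\ge0$, $\sum_r\beta_r=1$) satisfies $\DistPSQ{w}{\FlatA}\le\sum_r\beta_r\distNS{e_r}^2$, and when the weight $\beta$ is spread over a large subset of $\RSample$ this is $O(\Delta/n)$ — here I use that $\RSample$ is a \emph{uniform} sample of $P$, so $\sum_{r\in\RSample}\distNS{e_r}^2$ is, in expectation, only $\cardin{\RSample}\Delta/n$. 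Arranging $a,b$ to be such spread-out combinations gives $\DistPSQ{a}{\FlatA},\DistPSQ{b}{\FlatA}=O(\Delta/n)$, and then \lemref{distance:2}(ii), applied to the collinear triple $b,a,\bar p$ with extrapolation factor $t=O(k)$, yields $\DistPSQ{\bar p}{\FlatA}\le 9t^2\max\pth{\DistPSQ{a}{\FlatA},\DistPSQ{b}{\FlatA}}=O(k^2\Delta/n)$. Summing over $Q'$: $\sum_{p\in Q'}\DistPSQ{\bar p}{\FlatA}=O(k^2\Delta)$, so $\Price{Q'}{\FlatA}\le 2\sum_{p\in Q'}\distNS{e_p}^2+2\sum_{p\in Q'}\DistPSQ{\bar p}{\FlatA}=O(k^2\Delta)$, and tracking constants gives the bound $\ConstSL k^2\Delta$. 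The probability budget $\delta$ is split among the $\eps$-net event, the Markov step bounding $\sum_{r\in\RSample}\distNS{e_r}^2$, and the $O(k)$ quantile events used to define $Q'$.

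\textbf{Main obstacle.} The crux — and the reason the loss is $k^2$ rather than $k$, and the sample size $k^2\log$ rather than $k\log$ — is that one must \emph{not} bound $\DistPSQ{\bar p}{\FlatA}$ by the worst sampled point's fitting error: a constant fraction of the points of $P$ may carry essentially all of $\Delta$, such points will be hit by the sample, so an individual $\distNS{e_r}^2$ can be as large as $\Delta$. The reconstruction of $\bar p$ must therefore be routed only through reference points that are averages of many sampled points (error $O(\Delta/n)$), not through the $\le k+1$ vertices a naive Carathéodory argument produces, while \emph{simultaneously} keeping the extrapolation factor $t$ at $O(k)$ so that \lemref{distance:2}(ii) costs only a factor $k^2$. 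Balancing these two competing requirements — a fine enough net to control extrapolation, but reference points built from enough sampled points to control error — is the heart of the argument. A secondary nuisance, if the surrounding set is built in rounds, is that later rounds depend on earlier choices, which is handled by splitting $\RSample$ into independent blocks.
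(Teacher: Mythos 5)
Your proposal correctly identifies the three ingredients the paper also uses — working on the projected set inside $\FlatOpt$, convexity of $\DistPSQ{\cdot}{\FlatA}$, and the extrapolation bound \lemref{distance:2}(ii) — but it misses the single trick that makes the paper's argument go through cleanly, and the ``main obstacle'' paragraph is a symptom of that miss.

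The paper does \emph{not} bound the fitting error at sampled points by averaging over a spread-out convex combination. Instead it first passes to the subset
$U = \{\,p \in P : \DistPSQ{p}{\FlatOpt} \le 8\,\PrcOpt{P}{k}/n\,\}$,
which by Markov has $\cardin{U}\ge (7/8)n$, and then works only with $R=\RSample\cap U$. Since every $r\in R$ lies in $\FlatA=\Span(\RSample)$, its projection $\bar r$ satisfies $\DistPSQ{\bar r}{\FlatA}\le\distNS{r-\bar r}^2\le 8\,\PrcOpt{P}{k}/n$ \emph{pointwise}, with no averaging and no union bound over reference points. This dissolves the competing-requirements tension you describe at the end: with the $U$-restriction it is perfectly fine to route the reconstruction through a small number of sampled vertices, because each of them individually has error $O(\Delta/n)$, so convexity alone gives the same bound for every point of $\CH(R')$. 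Your proposal, by declining to throw away the bad $(1/8)n$ points of $P$ up front, creates the obstacle it then fails to resolve (the ``arranging $a,b$ to be spread-out combinations while keeping $t=O(k)$'' step is never made concrete, and it is not clear it can be, since the $\le k+1$ Carath\'eodory vertices needed to control $t$ are exactly the ones you cannot afford to concentrate weight on).

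The remaining difference is a genuine choice of route, not a gap: the paper uses $\eps$-nets with fixed parameter $1/24$ for the range space of \emph{ellipsoids} (VC dimension $O(k^2)$, hence the $k^2\log(k/\delta)$ sample size) and obtains the $O(k)$ extrapolation for free from John's theorem applied to the largest ellipsoid inscribed in $\CH(R')$. You propose nets with parameter $\Theta(1/k)$ for \emph{halfspaces} (VC dimension $O(k)$, hence the same sample size), plus a centerpoint/Carath\'eodory argument in place of John's theorem. Both land on the same $k^2\log$ sample size, but the John's-theorem step is what supplies the clean, uniform $k$-factor dilation that your sketch asserts but does not derive; if you retain the halfspace route you still need to prove the $O(k)$ extrapolation bound, which is precisely what the inscribed-ellipsoid argument delivers in one line.

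In short: add the Markov step restricting to the $(7/8)n$ good points before sampling is examined, and your ``crux'' disappears; and either import John's theorem on $\CH(R')$ or else actually prove the $O(k)$ extrapolation claim you currently only gesture at.
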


\begin{proof}
    Let $r = \PrcOpt{P}{k}/n$ be the average contribution of a point
    to the sum $\Price{P}{\FlatOpt}= \sum_{p\in P}
    \DistPSQ{p}{\FlatOpt}$, where $\FlatOpt$ is the optimal $k$-flat
    approximating $P$.  Let $U$ be the points of $P$ that contribute
    at most $8r$ to $\Price{P}{\FlatOpt}$; namely, those are the
    points in distance at most $\sqrt{8r}$ from $\FlatOpt$.  By
    Markov's inequality, we have $\cardin{U} \geq (7/8)n$.  Next,
    consider the random sample when restricted to the points of $U$;
    namely, $R = \RSample \cap U$.  By the Chernoff inequality, with
    probability at least $1- \delta/4$, we have that $\cardin{R} =
    \Omega(k^2 \log (k/\delta))$.
    
    Next, consider the projections of $R$ and $U$ onto $\FlatOpt$,
    denoted by $R'$ and $U'$, respectively.  Let $\Ell'$ be the
    largest volume ellipsoid, contained in $\FlatOpt$ that is enclosed
    inside $\CH(R')$, where $\CH(R')$ denotes the convex-hull of $R'$.
    Let $\Ell$ be the expansion of $\Ell'$ by a factor of $k$. By
    John's theorem \cite{m-ldg-02}, we have $R' \subseteq \CH(R')
    \subseteq \Ell$.

    Since $R'$ is a large enough random sample of $U'$, we know that
    $R'$ is a $(1/24)$-net for ellipsoids for the set $U'$. This
    follows as ellipsoids at Euclidean space $\Re^k$ have VC-dimension
    $O(k^2)$, as can be easily verified. (One way to
    see this is by lifting the points into $O(k^2)$ dimensions, where
    an ellipsoid in the original space is mapped into a half space.)

    \piccaption{The solid discs represent point in $R'$, the
       circle represents the points of $U'$.}
    \parpic[r]{
       \begin{minipage}{7.2cm}
           \includegraphics[width=0.99\linewidth]{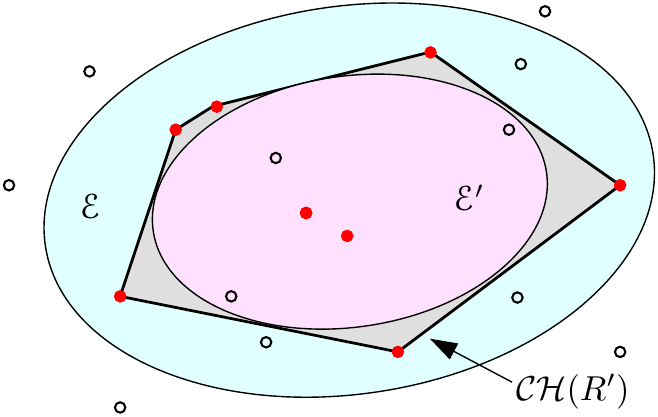}
       \end{minipage}
    }%
    As such, we can use the $\eps$-net theorem of Haussler and Welzl
    \cite{hw-ensrq-87}, which implies that $\cardin{ U' \cap \Ell}
    \geq (23/24)\cardin{U'}$, with probability at least $1-\delta/4$.
    This holds since no points of $R'$ falls outside $\Ell$, and as
    such only $(1/24)\cardin{U'}$ points of $U'$ might fall outside
    $\Ell$ by the $\eps$-net theorem. (Strictly speaking, the
    VC-dimension argument is applied here to ranges that are
    complements of ellipsoids. Since the VC dimension is persevered
    under complement the claim still holds.)

    For $p' \in \FlatOpt$, let $f(p') = \DistPSQ{p'}{\FlatA}$ denote
    the squared distance of $p'$ to $\FlatA$.  If $p' \in R'$ then the
    corresponding original point $p \in R \subseteq U$ and as such $p
    \in \FlatA = \Span(R)$. Implying that $f(p') =
    \DistPSQ{p'}{\FlatA} \leq \dist{p - p'}^2 = \DistPSQ{p}{\FlatOpt}
    \leq 8r$ since $p \in U$.  By \lemref{distance:2} (i), the
    function $f(\cdot)$ is convex and, for any $x \in \Ell'$, we have
    \begin{align*}
    f(x) \leq \max_{y \in \Ell'} f(y) \leq \max_{z \in \CH(R')} f(z)
    \leq \max_{w \in R'} f(w) \leq 8r,
    \end{align*}
    since $\Ell' \subseteq \CH(R')$.

    \parpic[l]{
       \begin{minipage}{4.2cm}
           \includegraphics[width=0.99\linewidth]{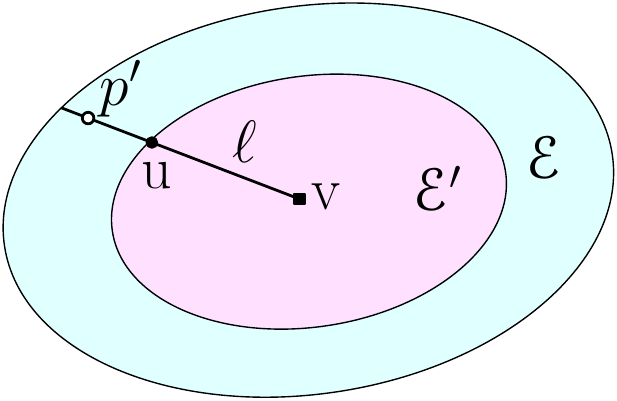}
       \end{minipage}
    } 

    Let $v$ denote the center of $\Ell'$, and consider any point $p'
    \in \Ell \subseteq \FlatOpt$. Consider the line $\ell$ connecting
    $v$ to $p'$, and let $u$ be one of the endpoints of $\ell \cap
    \Ell'$.  Observe that $\distNS{\ell \cap \Ell} \leq k \cdot \distNS{\ell
       \cap \Ell'}$. This implies by \lemref{distance:2} (ii) that
    \begin{eqnarray*}
        \forall p' \in \Ell \;\; \;\; 
        f(p') = \DistPSQ{p'}{\FlatA} \leq 9 k^2  \max(\DistP{u}{\FlatA},
        \DistP{v}{\FlatA}) \leq 72 k^2 r.
    \end{eqnarray*}
    In particular, let $Y$ be the set of points of $U$ such that their
    projection onto $\FlatOpt$ lies inside $\Ell$. For $p \in Y$, we
    have that
    \begin{align*}
    \DistPSQ{p}{\FlatA} \leq \pth{ \distNS{p - p'} +
       \DistP{p'}{\FlatA}}^2 = \pth{ \sqrt{8r} + \sqrt{72k^2 r}}^2
    \leq 128k^2 r,
    \end{align*}
    where $p'$ is its the projection of $p$ onto $\FlatOpt$.
    Furthermore, $\cardin{Y} = \cardin{U' \cap \Ell} \geq (23/24)
    \cardin{U'} \geq (23/24)(7/8) 8 \geq (3/4) n$.

    Specifically, let $Q$ be the set of $(3/4)n$ points of $P$
    closest to $\FlatA$. Since $\cardin{Q} \leq \cardin{Y}$, we have
    that for any $q \in Q$, it holds $\DistPSQ{q}{\FlatA} \leq \max_{y
       \in Y} \DistPSQ{y}{\FlatA} \leq 128 k^2 r$ and as such
    $\Price{Q}{\FlatA} \leq (3/4)n \cdot 128 k^2 r = \ConstSL k^2
    \cdot \PrcOpt{P}{k}$.
\end{proof}


\section{The algorithm}

\begin{figure}
    \centerline{\fbox{
           \begin{minipage}{0.9\linewidth}
               \begin{tabbing}
                   \ \ \ \= \ \ \ \= \ \ \ \=  \ \ \ \= \kill
                   \ApproxFlat{}($P, k, \delta, \eps$ ) \+ \+ \\
                   $P$ - point set, \ \ \ 
                   $k$ - dimension of required flat, \\
                   $\delta$ - confidence, \ \ 
                   $\eps$ - quality of approximation. \-\-\\
                   \Procbegin{} \+ \+ \\
                   \If $\cardin{P} \leq 20k$ \Then \Return $procSVD{}(P,k)$\\
                   $\RSample \leftarrow$ random sample
                   of size $ O\pth{ k^2 \log (k/\delta) }$
                   from $P$
                   (using \lemref{r:sample})\\
                   $\FlatA = \Span(\RSample)$\\
                   $X \leftarrow$ The set of the $(3/4)n$ closest points 
                   of $P$ to $\FlatA$, \ \ \ and \ \ \ 
                   $Y \leftarrow P \setminus X$.\\
                   $\FlatB = \ApproxFlat(Y, k,
                   \delta/2, 1)$.\\
                   $\FlatC \leftarrow$ extract best $k$-flat
                   in $\Span(\FlatA \cup \FlatB)$
                   approximating $P$ 
                   (using \lemref{flat:extract})\\
                   $\FlatD \leftarrow$ Compute a $k$-flat of price
                   $\leq (1+\eps)\PrcOpt{P}{k}$ \\
                   \ \ \ \ \ \ \ \ \ \ 
                   (using \lemref{easy} with $\FlatC$ and  $c =
                   \ConstSAlg k^2/\eps$)\\
                   \Return $\FlatD$ \- \- \\
                   \End{}
               \end{tabbing}
           \end{minipage}}}
     \caption{The algorithm for approximating the best $k$. The
        procedure \procSVD$(P,k)$ extracts the best $k$-flat using
        singular value decomposition computation.  The algorithm
        returns a $k$-flat $\FlatD$ such that $\Price{P}{\FlatD} \leq
        (1+\eps)\PrcOpt{P}{k}$, and this happens with probability
        $\geq 1-\delta$.}
    \figlab{algorithm}
\end{figure}

The new algorithm \ApproxFlat{} is depicted in \figref{algorithm}.
\begin{theorem}
    \thmlab{main}%
    For a set $P$ of $n$ points in $\Re^d$, and parameters $\eps$ and
    $\delta$, the algorithm \ApproxFlat{}$(P,k,\delta,\eps)$ computes
    a $k$-flat $\FlatD$ of price $\Price{P}{\FlatD} \leq
    (1+\eps)\PrcOpt{P}{k}$. If $k^2 \log(k/\delta) \leq d$, the
    running time of this algorithm is $O \pth{\NSize k \pth{ \eps^{-1}
          + k } { \log (k/(\eps \delta))}}$ and it succeeds with
    probability $\geq 1-\delta$, where $\NSize = n  d$ is the
    input size.
\end{theorem}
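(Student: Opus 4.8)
The plan is to prove both the approximation guarantee and the running time by induction on $\cardin{P}$, following the (single-branch) recursive structure of \ApproxFlat{}. Throughout I may assume $\eps\le 1$, since for larger $\eps$ one simply runs the algorithm with $\eps=1$. The base case $\cardin{P}\le 20k$ is immediate: \procSVD{} returns the exact optimal $k$-flat, so $\Price{P}{\FlatD}=\PrcOpt{P}{k}\le(1+\eps)\PrcOpt{P}{k}$, at a cost negligible against the overall bound.

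For the inductive step I would condition on three favourable events and chain the earlier lemmas. First, by \lemref{r:sample} (invoked with a constant fraction of the confidence budget), the span $\FlatA=\Span(\RSample)$ satisfies $\Price{X}{\FlatA}\le \ConstSL k^2\cdot\PrcOpt{P}{k}$, because $X$ is exactly the set of $(3/4)n$ points of $P$ closest to $\FlatA$ (the set called $Q$ there). Second, since $\cardin{Y}=n/4<n$, the induction hypothesis applies to $\ApproxFlat(Y,k,\delta/2,1)$: with probability $\ge 1-\delta/2$ its output $\FlatB$ satisfies $\Price{Y}{\FlatB}\le 2\PrcOpt{Y}{k}\le 2\PrcOpt{P}{k}$, using $Y\subseteq P$. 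As $\FlatA,\FlatB\subseteq\Span(\FlatA\cup\FlatB)$, distances to the merged flat only decrease, so $\Price{P}{\Span(\FlatA\cup\FlatB)}\le\Price{X}{\FlatA}+\Price{Y}{\FlatB}\le(\ConstSL k^2+2)\PrcOpt{P}{k}\le 98k^2\cdot\PrcOpt{P}{k}$. By \lemref{extract:k:flat} (with the constant $c=98k^2$, a constant since $k$ is fixed) there is a $k$-flat inside $\Span(\FlatA\cup\FlatB)$ of price at most $5\cdot 98k^2\cdot\PrcOpt{P}{k}\le\ConstSAlg k^2\cdot\PrcOpt{P}{k}\le(\ConstSAlg k^2/\eps)\PrcOpt{P}{k}$; and $\FlatC$, being by \lemref{flat:extract} the cheapest $k$-flat in $\Span(\FlatA\cup\FlatB)$, satisfies the same bound. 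Thus $\FlatC$ meets the hypothesis of \lemref{easy} with $c=\ConstSAlg k^2/\eps$, and the third favourable event --- the success of \lemref{easy}, again with a constant fraction of the confidence --- gives $\Price{P}{\FlatD}\le(1+\eps)\PrcOpt{P}{k}$. Splitting the $\delta$ budget so that the own-failure probability at depth $i$ is at most $\delta/2^{i+1}$, a union bound over the $O(\log n)$ levels gives overall success probability $\ge 1-\delta$.

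For the running time, let $n_i=n/4^i$ be the size of the point set at recursion depth $i$, so that $\sum_i n_i d=O(nd)=O(\NSize)$. At depth $i$ the non-recursive work comprises: drawing $\RSample$ and computing $\FlatA=\Span(\RSample)$ together with the distances of the point set to $\FlatA$, at cost $O(n_i d\cdot\dim(\FlatA))$; selecting $X$ in linear time; extracting $\FlatC$ from the $u$-flat $\Span(\FlatA\cup\FlatB)$ via \lemref{flat:extract}, at cost $O(n_i d\,u+n_i u^2)$; and invoking \lemref{easy}, at cost $O(n_i d\,(k/\eps)\log(\ConstSAlg k^2/(\eps\delta)))$. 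Since $\dim(\FlatA)=O(k^2\log(k/\delta))$ and $u=\dim(\FlatA)+k=O(k^2\log(k/\delta))$, the hypothesis $k^2\log(k/\delta)\le d$ gives $u=O(d)$, hence $n_i u^2=O(n_i d\,u)$, and the per-level cost is $O\pth{n_i d\, k(k+1/\eps)\log(k/(\eps\delta))}$. Because the recursive calls use $\eps=1$, only the top-level call carries the $1/\eps$ factor; summing the per-level costs against $\sum_i n_i d=O(\NSize)$ --- and noting that the confidence $\delta/2^i$ at depth $i$ inflates $\dim(\FlatA)$ only additively in $i$, contributing a convergent $\sum_i i\cdot 4^{-i}$ correction to the logarithmic factors --- yields the claimed bound $O\pth{\NSize\, k(\eps^{-1}+k)\log(k/(\eps\delta))}$.

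The delicate point, and the reason the bound is not entirely routine, is that the recursion has depth $\Theta(\log n)$, so a careless accounting would reintroduce precisely the $\log n$ factor the algorithm is designed to eliminate. This is averted because each recursive subproblem runs on a quarter of the points, so the $n_i d$ terms form a convergent geometric series and the total is genuinely linear in $\NSize$ up to factors polynomial in $k$ and $1/\eps$ and logarithmic in $k/(\eps\delta)$. One also has to keep $\dim(\Span(\FlatA\cup\FlatB))$ at $O(k^2\log(k/\delta))\le d$ along the whole recursion --- so that it is a genuine flat in $\Re^d$ and the $n_i u^2$ term of \lemref{flat:extract} stays dominated by $n_i d\,u$ --- and to verify that halving $\delta$ down the recursion keeps the summed failure probability at most $\delta$ while growing the sample sizes, and hence the flat dimensions, only mildly.
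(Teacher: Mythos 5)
Your proof is correct and follows essentially the same route as the paper: an induction for the approximation quality (chaining \lemref{r:sample}, the inductive hypothesis on $Y$ with $\eps=1$, \lemref{extract:k:flat}, and \lemref{easy}), a union bound over recursion levels with a geometrically shrinking confidence budget, and a geometric-series argument for the running time recurrence $T(n,\delta)=T(n/4,\delta/2)+O(\cdot)$. You are in fact slightly more explicit than the paper on two points the paper leaves implicit: that the recursive calls fix $\eps=1$ so the $1/\eps$ factor appears only at the top level, and that the confidence halving at each depth only inflates $\log(k/\delta)$ additively in the depth, which the $4^{-i}$ decay of the subproblem sizes absorbs.
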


\begin{proof}
    Let us first bound the price of the generated approximation. The
    proof is by induction. For $\cardin{P} \leq 20k$ the claim
    trivially holds.  Otherwise, by induction, we know that
    $\Price{Y}{\FlatB} \leq 2 \PrcOpt{Y}{k} \leq 2 \PrcOpt{P}{k}$.
    Also, $\Price{X}{\FlatA} \leq \ConstSL{} k^2 \cdot \PrcOpt{P}{k}$,
    by \lemref{r:sample}. As such, $\Price{P}{\Span(\FlatA \cup
       \FlatB)} \leq \Price{X}{\FlatA} + \Price{Y}{\FlatB} \leq 98 k^2
    \PrcOpt{P}{k}$. By \lemref{extract:k:flat}, the $k$-flat $\FlatC$
    is of price $\leq \ConstSAlg k^2 \PrcOpt{P}{k}$. Thus, by
    \lemref{easy}, we have $\Price{P}{\FlatD} \leq (1 + \eps)
    \PrcOpt{P}{k}$, as required.

    Next, we bound the probability of failure. We picked $\RSample$
    such that \lemref{r:sample} holds with probability $\geq
    1-(\delta/10)$. By induction, the recursive call succeeds with
    probability $\geq 1-\delta/2$. Finally, we used the algorithm of
    \lemref{easy} so that it succeeds with probability $\geq
    1-(\delta/10)$. Putting everything together, we have that the
    probability of failure is bounded by $\delta/10 + \delta/2 +
    \delta/10 \leq \delta$, as required.

    As for the running time, we have $T(20k, \delta) = O( d k + k^3)$.
    Next,
    \begin{eqnarray*}
        T(n, \delta) &=& T(n/4,\delta/2) + 
        O\pth{ \NSize k^2 \log (k/\delta)
           + n \pth{ k^2 \log (k/\delta)}^2 + \NSize (k/\eps) 
           \log(k/(\eps \delta)) }\\
        &=& O \pth{\NSize k  \pth{ \frac{1}{\eps} + k } 
           { \log \frac{k}{\eps \delta}}
          },
    \end{eqnarray*}
    assuming $k^2 \log(k/\delta) \leq d$.
\end{proof}

\begin{remark}    
    If $k^2 \log(k/\delta) \geq d$ then we just use the standard \SVD
    algorithm to extract the best $k$-flat in time $O(n d^2)$, which
    is better than the performance guaranteed by \thmref{main} in this
    case.
    
    Note, that for a fixed quality of approximation and constant
    confidence, the algorithm of \thmref{main} runs in $O(\NSize k^2
    \log k)$ time, and outputs a $k$-flat as required.

    Also, by putting all the random samples used by \thmref{main}
    together. We get a sample $\RSample$ of size $O( k^2
    \log(k/\delta) \log n + k^4/\eps )$ which spans a flat that
    contains a $k$-flat which is the required
    $(1+\eps)$-approximation.
\end{remark}

\begin{remark}
    If the input is given as a matrix $\MatA$ with $n$ rows and $d$
    columns, we transform it (conceptually) to a point set $P$ of $n$
    points in $\Re^d$.  The algorithm of \thmref{main} applied to $P$
    computes a $k$-flat $\FlatD$, which we can interpret as a $k$-rank
    matrix approximation to $\MatA$, by projecting the points of $P$
    onto $\FlatD$, and writing down the $i$\th projected point as the
    $i$\th row in the generated matrix $\MatB$.

    In fact, \thmref{main} provides an approximation to the optimal
    $k$-rank matrix under the Frobenius norm and not only the
    \emphi{squared} Frobenius norm. Specifically, given a matrix
    $\MatA$, \thmref{main} computes a matrix $\MatB$ of rank $k$, such
    that
    \begin{align*}
        \Fnorm{\MatA - \MatB}%
        = %
        \sqrt{ \Price{P}{\FlatD} }%
        \leq%
        \sqrt{(1+\eps)\PrcOpt{P}{k}} \leq (1+\eps) \min_{\MatC \text{
              matrix of rank } k} \Fnorm{\MatA - \MatC}.
    \end{align*}
\end{remark}

\section{Conclusions}

In this paper we presented a linear-time algorithm for low-rank matrix
approximation.  We believe that our techniques and more geometric
interpretation of the problem is of independent interest.  Note, that
our algorithm is not pass efficient since it requires $O( \log n)$
passes over the input (note however that the algorithm is IO
efficient). We leave the problem of how to develop a pass efficient
algorithm as an open problem for further research.

Surprisingly, the current bottleneck in our algorithm is the sampling
lemma (\lemref{r:sample}). It is natural to ask if it can be further
improved. In fact, the author is unaware of a way of proving this
lemma without using the ellipsoid argument (via $\eps$-net or random
sampling \cite{c-narsc-87} techniques), and an alternative proof
avoiding this might be interesting.  We leave this as an open problem
for further research.


\section*{Acknowledgments}

This result emerged during insightful discussions with Ke Chen. The
author would also like to thank Muthu Muthukrishnan and Piotr Indyk
for useful comments on the problem studied in this paper.

 
 \providecommand{\CNFX}[1]{ {\em{\textrm{(#1)}}}}
  \providecommand{\CNFSoCG}{\CNFX{SoCG}}
  \providecommand{\CNFCCCG}{\CNFX{CCCG}}
  \providecommand{\CNFFOCS}{\CNFX{FOCS}}
  \providecommand{\CNFSODA}{\CNFX{SODA}}
  \providecommand{\CNFSTOC}{\CNFX{STOC}}
  \providecommand{\CNFPODS}{\CNFX{PODS}}
  \providecommand{\CNFISAAC}{\CNFX{ISAAC}}
  \providecommand{\CNFFSTTCS}{\CNFX{FSTTCS}}
  \providecommand{\CNFIJCAI}{\CNFX{IJCAI}}
  \providecommand{\CNFBROADNETS}{\CNFX{BROADNETS}}
  \providecommand{\CNFLICS}{\CNFX{LICS}}
  \providecommand{\CNFSWAT}{\CNFX{SWAT}}  \providecommand{\CNFESA}{\CNFX{ESA}}
   \providecommand{\CNFX}[1]{{\em{\textrm{(#1)}}}}
  \providecommand{\CNFX}[1]{{\em{\textrm{(#1)}}}}
  \providecommand{\CNFCCCG}{\CNFX{CCCG}}
  \providecommand{\CNFFOCS}{\CNFX{FOCS}}
  \providecommand{\CNFSODA}{\CNFX{SODA}}
  \providecommand{\CNFSTOC}{\CNFX{STOC}}
  \providecommand{\CNFPODS}{\CNFX{PODS}}
  \providecommand{\CNFIPCO}{\CNFX{IPCO}}
  \providecommand{\CNFISAAC}{\CNFX{ISAAC}}
  \providecommand{\CNFFSTTCS}{\CNFX{FSTTCS}}
  \providecommand{\CNFIJCAI}{\CNFX{IJCAI}}
  \providecommand{\CNFBROADNETS}{\CNFX{BROADNETS}}
  \providecommand{\CNFESA}{\CNFX{ESA}}  \providecommand{\CNFSWAT}{\CNFX{SWAT}}
   \providecommand{\CNFWADS}{\CNFX{WADS}}
  \providecommand{\SarielWWWPapersAddr}{http://sarielhp.org/p}
  \providecommand{\SarielWWWPapers}{http://sarielhp.org/p}
  \providecommand{\urlSarielPaper}[1]{\href{\SarielWWWPapersAddr/#1}
  {\SarielWWWPapers{}/#1}}
  \providecommand{\tildegen}{{\protect\raisebox{-0.1cm}
  {\symbol{'176}\hspace{-0.03cm}}}}
  \providecommand{\CNFINFOCOM}{\CNFX{INFOCOM}}
  \providecommand{\Merigot}{M{\'{}e}rigot}
  \providecommand{\Badoiu}{B\u{a}doiu}  \providecommand{\Matousek}{Matou{\v
  s}ek}  \providecommand{\Erdos}{Erd{\H o}s}
  \providecommand{\Barany}{B{\'a}r{\'a}ny}
  \providecommand{\Bronimman}{Br{\"o}nnimann}
  \providecommand{\Gartner}{G{\"a}rtner}  \providecommand{\Badoiu}{B\u{a}doiu}


\end{document}